\documentclass[sigconf]{acmart}

\settopmatter{printacmref=false} 
\renewcommand\footnotetextcopyrightpermission[1]{} 
\usepackage{amsthm}
\usepackage{algorithmic}
\usepackage[ruled,linesnumbered]{algorithm2e}
\usepackage{graphicx}
\usepackage{subcaption}
\usepackage{epstopdf}
\usepackage{tabularx}  
\usepackage{fancyhdr}
\usepackage{comment}

\newtheorem{assumption}{Assumption}

\newtheorem{theorem}{Theorem}

\AtBeginDocument{%
  \providecommand\BibTeX{{%
    Bib\TeX}}}

\setcopyright{acmlicensed}
\copyrightyear{2018}
\acmYear{2018}
\acmDOI{XXXXXXX.XXXXXXX}

\acmConference[MobiHoc 2024]{Make sure to enter the correct
  conference title from your rights confirmation emai}{October 14--17,
  2024}{Athens, Greece}
\acmISBN{978-1-4503-XXXX-X/18/06}




\begin{document}

\title{Resource Allocation and Secure Wireless Communication in the Large Model-based Mobile Edge Computing System}

\author{Zefan Wang}
\affiliation{%
  \institution{Nanyang Technological University}
  \country{Singapore}
}
\email{zefan001@e.ntu.edu.sg}

\author{Yitong Wang}
\affiliation{%
  \institution{Nanyang Technological University}
  \country{Singapore}}
\email{yitong002@e.ntu.edu.sg}

\author{Jun Zhao}
\affiliation{%
  \institution{Nanyang Technological University}
  \country{Singapore}}
\email{junzhao@ntu.edu.sg}


\begin{abstract}
  With the rapid advancement of large models and mobile edge computing, transfer learning, particularly through fine-tuning, has become crucial for adapting models to downstream tasks. Traditionally, this requires users to share their data with model owners for fine-tuning, which is not only costly but also raises significant privacy concerns. Furthermore, fine-tuning large-scale models is computationally intensive and often impractical for many users. To tackle these challenges, we introduce a system that combines offsite-tuning with physical-layer security, which provides local data owners with a lightweight adapter and a compressed emulator. Data owners then fine-tune the adapter locally and securely send it back to the model owners through a confidential channel for integration, ensuring privacy and resource conservation. Our paper focuses on optimizing computational resource allocation among data owners and the large model owner deployed on edge, and on the compression ratio of adapters. We incorporate a secrecy uplink channel to maximize the utility that we defined while minimizing system costs like energy consumption and delay. The optimization uses the Dinkelbach algorithm, fractional programming, successive convex approximation and alternating optimization. Experiments demonstrate our algorithm's superiority over existing methods.
\end{abstract}

\begin{CCSXML}
<ccs2012>
   <concept>
       <concept_id>10003033</concept_id>
       <concept_desc>Networks</concept_desc>
       <concept_significance>500</concept_significance>
       </concept>
   <concept>
       <concept_id>10003033.10003068.10003073.10003074</concept_id>
       <concept_desc>Networks~Network resources allocation</concept_desc>
       <concept_significance>500</concept_significance>
       </concept>
 </ccs2012>
\end{CCSXML}

\ccsdesc[500]{Networks}
\ccsdesc[500]{Networks~Network resources allocation}

\keywords{Large Model, Physical Layer Security, Mobile Edge Computing, Resource Allocation.}


\maketitle
\thispagestyle{plain}
\pagestyle{plain}

\section{Introduction}
Large models (LM)\footnote{Although many studies simply say ``foundation models'', we adopt the name ``large models'' to highlight both the sheer size and the foundation nature of the models. Such a name is also used in prior work~\cite{rawte2023survey}.} are generative machine learning models pre-trained on large-scale unlabeled datasets (e.g., BookCorpus or English Wikipedia~\cite{bommasani2021opportunities}). Pre-trained models have the capability to acquire generalized representations that can be effectively utilized in fine-tuning by using smaller, task-specific datasets supplied by various downstream users to cater to a diverse array of downstream tasks~\cite{hu2021lora}, where models like GPT-4 by OpenAI stand out with their sophisticated understanding and creation of human-like text~\cite{devlin2018bert}. Meanwhile, LMs like DALL-E and CLIP demonstrate remarkable abilities in understanding and generating visual content~\cite{radford2021learning}, and Google-USM achieving new levels of accuracy and naturalness in speech-to-text conversion~\cite{zhang2023google}.

\textbf{Challenges and motivation.} 
There are several challenges in the process of fine-tuning LMs. On the one hand, there is a desire to exploit the full capabilities of LMs, on the other, the sensitive information has potential risks of exposure to LM service providers. This risk is non-neglectable in many applications such as healthcare or finance~\cite{behnia2022ew, xu2023federated}. A potential solution could be localized deployment, where the large model owner (LMO) deploys the model on the client. However, due to the large structures of the LMs, deployment and inference locally are essentially impossible, for instance, GPT-4 consists of an astonishing 175 billion parameters~\cite{achiam2023gpt}. Moreover, local deployment risks may leak the models' intellectual property (IP). Offsite-tuning has been proposed as a viable Parameter Efficient Fine-Tuning (PEFT) solution~\cite{ding2023parameter, xiao2023offsite}. In this approach, the LMO sends a lightweight adapter and a lossy compressed emulator to the data owner (DO). The DO then fine-tunes the adapter using the emulator. Once fine-tuned, the adapter is returned to the LMO and integrated into the full model to create an adapted FM. Under offsite-tuning, the DO does not need to share their training data, and the LMO does not share the full model weights.

A significant security challenge in the context of offsite-tuning is the potential eavesdropping on the communication channel, where eavesdroppers could intercept the adapter during transmission. If successful, they might use the data within the adapter to infer sensitive information about the DO's local data or objectives. Eavesdroppers can passively gather intel without altering the data, such as in VoIP eavesdropping, or actively insert themselves into the network, manipulating the data as in man-in-the-middle attacks. These attacks can lead to the theft of intellectual property, passwords, and other sensitive data, significantly compromising the security and privacy of the data owner~\cite{ma2018security, kapetanovic2015physical}. In response to these challenges, our primary motivation is to develop secure wireless communication in an LM-based edge computing system, which cooperates with the PEFT technique for resource-saving.  

\textbf{Contributions.} The major contributions of our paper are summarized as follows:

\begin{itemize}
    \item We formulate the problem of maximizing the users’ utility-consumption ratio (UCR) under physical-layer security for the offsite-tuning process of the LM. To our knowledge, we first study and investigate UCR in the context of the increasingly critical and emergent domain of offsite-tuning for LMs.
    \item We develop an advanced optimization algorithm by integrating the Dinkelbach algorithm, fractional programming, Successive Convex Approximation (SCA), and alternating optimization techniques. This innovative approach enables our algorithm to efficiently address and solve complex UCR optimization challenges, establishing a new benchmark for effectiveness in this field.
    \item We demonstrate through simulations that our algorithm significantly outperforms existing baselines, adeptly coordinating the optimization of adapter parameter offloading for users, resource allocation, and the balance between user utility and cost. This evidence showcases our algorithm's superior capability to navigate and optimize the multifaceted aspects of the offsite-tuning process, thereby advancing both operational efficiency and security.
\end{itemize}


\textbf{Roadmap.} The organization of this paper is as follows. Section~\ref{sec_rela} presents related work. Section~\ref{sec_sys} introduces the proposed system model and the formulated problem. Our proposed solution, encompassing the algorithmic approach, is meticulously expounded in Section~\ref{sec_solu}. Section~\ref{sec_exp} provides the numerical results, and Section~\ref{sec_con} summarizes our paper.

\section{Related Work} \label{sec_rela}
This section explores relevant research across three focused areas pertinent to our study. Section~\ref{rela_1} is mainly devoted to current research in solving resource allocation problems over wireless communication. Section~\ref{rela_2} focuses on integrating large models (LMs) with edge computing, and Section~\ref{rela_3} focuses on the physical layer security in mobile communication.

\subsection{Resource allocation over wireless communication} \label{rela_1}
Recently, several studies have successfully addressed resource allocation challenges through a variety of approaches. As for convex optimization, Luo~\textit{et~al.}~\cite{luo2020hfel} considered the hierarchical federated edge learning (HFEL) in wireless communication, the minimization of system overhead is addressed separately using the Karush-Kuhn-Tucker (KKT) method in convex optimization for both single and multi-edge server scenarios. Yu~\textit{et~al.}~\cite{yu2020joint} utilized another classical method in optimization, successive convex approximation (SCA) transforming the non-convex problem, that performs resource allocation under constraints on UAV's battery size and service quality, aiming to minimize system latency by transforming the non-convex problem. Additionally, some works also tried to use deep learning strategies for non-convex or complex sequential problems. For instance, He~\textit{et~al.}~\cite{he2019joint} solved the problem of channel assignment under Non-orthogonal multiple access (NOMA) with deep reinforcement learning (DRL), which further utilized attention-based neural network for better performance. Furthermore, Guo~\textit{et~al.}~\cite{guo2020joint} considered a more sophisticated approach to orchestrate multiple agents for jointly solving the handover control and power allocation problem. However, many previous works address the resource allocation problem in various scenarios, but none of them jointly consider such a problem based on LM and related fine-tuning issues. 

\subsection{Large model with mobile edge computing.} \label{rela_2}
Integrating LM with mobile edge computing (MEC) has become a pivotal area of research. Liu~\textit{et~al.}~\cite{shen2023large} explored the use of Generative Pretrained Transformers (GPT) in edge AI systems. Their framework employs a cloud-edge-client hierarchical architecture, where GPT, located in the cloud, coordinates with AI models on edge servers and devices. This system efficiently processes user requests in natural language, demonstrating GPT's potential in autonomously organizing and optimizing edge AI models. Tian~\textit{et~al.}~\cite{tian2022fedbert} proposed a novel learning approach named FedBERT. This method combines federated learning with split learning for pre-training BERT in a federated manner. FedBERT addresses the challenge of pre-training LMs and ensures that sensitive local data of clients are not communicated. Furthermore, Jiang~\textit{et~al.}~\cite{jiang2023high} focused on enhancing the efficiency of large models in edge computing scenarios by balancing computational loads between edge devices and cloud servers and Dong~\textit{et~al.}~\cite{dong2023lambo} highlighted the use of these models in improving the performance and capabilities of edge computing systems. Although many studies consider deploying LM at the edge, they don't address the potential resource allocation issues with optimization accordingly, like the ones specified in our work.


\subsection{Secure wireless communication under physical layer security} \label{rela_3}
Several studies have explored physical layer security with varying numbers of eavesdroppers. Cui~\textit{et~al.}~\cite{cui2019secure} implemented the Alternative Optimization algorithm to jointly optimize the transmit beamforming vector at the base station and the phase elements at the Intelligent Reflecting Surface (IRS), aiming to maximize the secrecy rate. Considering multi-eavesdroppers, Yang~\textit{et~al.}~\cite{yang2020deep} proposed a RL-based secure beamforming approach to jointly optimize users secrecy rate and quality of service (QoS). However, to our knowledge, none specifically address secure communication during the fine-tuning process to safeguard both the model and user security, as articulated in our paper.

\section{System Model and Problem Formulation} \label{sec_sys}
In this section, we formally describe our system model from the holistic to the particular. We will first introduce the offsite-tuning model as the foundation of the entire system, followed by the specific implementation model on the large model owner (LMO) and data owners (DOs) sides, in addition to the transmission model. At the end of this section, we give the comprehensive objective function for the communication model.

\subsection{System Model}
We consider a large model (LM) based secure mobile edge computing system, as shown in Fig.~\ref{fig_system}, where the LMO, which also be referred to as the LM server, positioned at the edge side managing its downlink channel with all $N$ single-antenna mobile users, also designated as DOs, with each DO denoted by $n \in \mathcal{N}=\{1, 2, \dots, N\}$.

\textbf{Offsite-tuning Model.} To preserve both parties’ privacy and enhance efficiency and effectiveness without disrupting the operational system, we utilize offsite-tuning in our system~\cite{xiao2023offsite}. Given a pre-trained Transformer backbones LM $\mathcal{M}$ deployed on the edge server side, encompassing a total of $\mathcal{L}$ layers (i.e., model depth). We divide the $\mathcal{M}$ into two distinct components: a small, trainable adapter $\mathcal{A}_n$, which is used for downstream adaptation for $n$-th DO, and a remaining frozen portion of the model, denoted as emulator $\mathcal{E}_n$. In the selection of adapters, the paramount consideration for the adapter transmitted to the downstream tasks is its suitability across various downstream tasks, which is influenced by the extraction of diverse combinations of shallow and deep layers from the Transformer. We employ the sandwich structure proven superior to single-layer configurations in~\cite{xiao2023offsite}, where $l_{n}^a$ layers are selected simultaneously from both the top and bottom layers of $\mathcal{M}$ (i.e., $\mathcal{M} = \mathcal{A}_{n}^1\circ\mathcal{E}_n\circ\mathcal{A}_{n}^2$ and $\mathcal{A}_{n}^1+\mathcal{A}_{n}^2=l_{n}^a$). A variable called adapter extraction rate denoted as $\boldsymbol{\phi} = [\phi_1, \phi_2, \dots, \phi_N]$ for each DO, $n \in \mathcal{N}$ is introduced to denote the number of layers selects from both top and bottom layers of $\mathcal{M}$. To ensure the feasibility of offsite tuning and the overall system efficiency, $\phi_n \in (0, \frac{2}{L}\times \chi^a_n]$, where $\chi^a_n, n \in \mathcal{N}$ is the confidentiality coefficient for different downstream tasks, $l_{n}^a = 2\phi_n$ and $\frac{2}{L}\times \chi^a_n = \phi_{n, max}$. For convenience, $\mathcal{A}_{n}^1, \mathcal{A}_{n}^2$ are consolidated into $\mathcal{A}_n$, such that $\forall n \in \mathcal{N}, \mathcal{M}=[\mathcal{A}_n,\mathcal{E}_n]$ is the concatenation of these two parts.

For the remaining frozen component emulator $\mathcal{E}_n$, it is imperative that the data contained within $\mathcal{E}_n$ are not overly precise, as such precision could inadvertently reveal information about the original model. To address this, we utilize the layer drop technique~\cite{sajjad2023effect} to uniformly select a subset of layers from $\mathcal{E}_n$, the dropped emulator denoted as $\mathcal{E}^*_n$, for utilization. A discrete variable layer retention rate $\chi^e_n, n \in \mathcal{N}$, is introduced and selected from the interval $[0.1, 0.3, 0.5, 0.7, 0.9]$ to govern the number of layers $l_{n}^e$ present in $\mathcal{E}^*_n$, where $l_{n}^e = \chi^e_n (1-2\phi_n)$. Emulator $\mathcal{E}^*_n$ encompasses the gradient information pertinent to the adapter updates, while these gradient details must not be excessively precise, lest they lead to the inadvertent disclosure of information pertaining to the original model. Furthermore, an emulator characterized by a smaller volume of data is more conducive to system efficiency~\cite{xiao2023offsite}. We aim to find a balance between these requirements. First, we calculate the loss between two emulators using the Mean Square Error (MSE) to ensure $\mathcal{E}^*_n$ encapsulates as much of the gradient information necessary for training as possible. Given that the discrepancy between $\mathcal{E}_n$ and $\mathcal{E}^*_n$ is primarily influenced by $\chi^e_n$, we define the loss as a function solely dependent $\chi^e_n$, which can be expressed as $\mathcal{L}_{MSE}(\chi^e_n) = \frac{1}{N}\sum_{i=1}^N ||\mathcal{E}^*_n - \mathcal{E}_n||^2$. Furthermore, the efficiency can be defined as a function that decreases with an increase in the number of layers within $\mathcal{E}^*_n$, which is correlated with $\phi_n$ and $\chi^e_n$, the efficiency can be represented as $\mathcal{G}(\phi_n, \chi^e_n) = \log(1 + \frac{\phi_n}{\chi^e_n})$, with a designed minimal fine-tuning efficiency $\mathcal{G}(\phi_n, \chi^e_n) \geq E^{\text{eff}}_{n, \text{min}}$.

\textbf{FDMA.} In our paper, we implement Frequency Division Multiple Access (FDMA) for communication between devices and the base station. FDMA is known for its simplicity and suitability for mobile devices with limited computational capabilities, which aids in managing the bandwidth and transmission power allocation without the complications of interference. We consider uplink communication where each DO trains and uploads adapter $\mathcal{A}_n$ back to LMO. We define $\boldsymbol{b} = [b_1, b_2, \dots, b_N]$, $\boldsymbol{p} = [p_1, p_2, \dots, p_N]$ as the bandwidth and transmission power for each DO communication to LMO and $b_{n,\text{max}}$, $p_{n,\text{max}}$ are used for fair resource allocation~\cite{wang2016resource, chen2019energy}.

\textbf{User training adapter $\mathcal{A}_n$.} Based on the above discussion, in our system, we account for the varying GPU frequencies of each user as their computational resources allocated to fine-tuning the adapter $\mathcal{A}_n$, denoted as $\boldsymbol{f^{\text{DO}}} = [f^{\text{DO}}_1, f^{\text{DO}}_2, \dots, f^{\text{DO}}_N]$ and the user's computational capabilities directly impact the training efficiency of the model's adapter. We posit that for any given model considered in our paper, the model's parameter size is invariably positively correlated with the number of layers within the model. In the Transformer models we employ, each layer typically exhibits a similar structure~\cite{vaswani2017attention} (i.e., the model's parameter size can be estimated using the number of layers $\times$ the average parameter count per layer). Consequently, we can directly utilize the number of layers of each model as a representation of the corresponding size of the model. The computational time for each DO can be represented as 
\begin{equation}
    t^{\text{DO:cmp}}_{n}(f^{\text{DO}}_n, \phi_n) = \frac{\mathcal{C}_n(\phi_n)}{f^{\text{DO}}_n}, \label{time_user_proce}
\end{equation}
where $\mathcal{C}_n(\phi_n)$ is the total GPU cycles required for user $n$'s fine-tuning. For computation function $\mathcal{C}_n(\phi_n)$, the number of computation cycles required for user training increases with $\phi_n$ since more trainable parameters in $\mathcal{A}_n$. In our definition, the computation function $\mathcal{C}_n(\phi_n)$ can be formalized as
\begin{equation}
    \mathcal{C}_n(\phi_n)  = C_1 \phi_n ^{C_2}, \forall n \in \mathcal{N},  \label{c_n}
\end{equation}
where $C_1>0$, $C_2>1$. Based on the user's computation discussion~\cite{shen2019computation}, the energy consumed for DO training can be represented as 
\begin{equation}
    E^{\text{DO:cmp}}_{n}(f^{\text{DO}}_n, \phi_n) =  k_n \mathcal{C}_n(\phi_n) (f^{\text{DO}}_n)^2 ,\label{ener_user_process}
\end{equation}
where $k_n$ is the effective switched capacitance for the user. 

\textbf{User sending back adapter to server.} 
After DO's processing, the user transmits the fine-tuned adapter back to the LMO deployed at the edge server side. According to Shannon's formula, the transmission rate for each user can be expressed as 
\begin{equation}
    r_{n}(b_{n}, p_n) = b_{n}\log_2(1+\frac{g_{n}p_n}{\sigma_{n}^2 b_{n}}), \label{rate_user_send}
\end{equation}
where $\sigma_n^2$ is the power spectral density of Gaussian noise, and $g_n$ is the channel attenuation from DO to LMO.

\begin{figure}[h]
    \centering
    \includegraphics[scale=0.35]{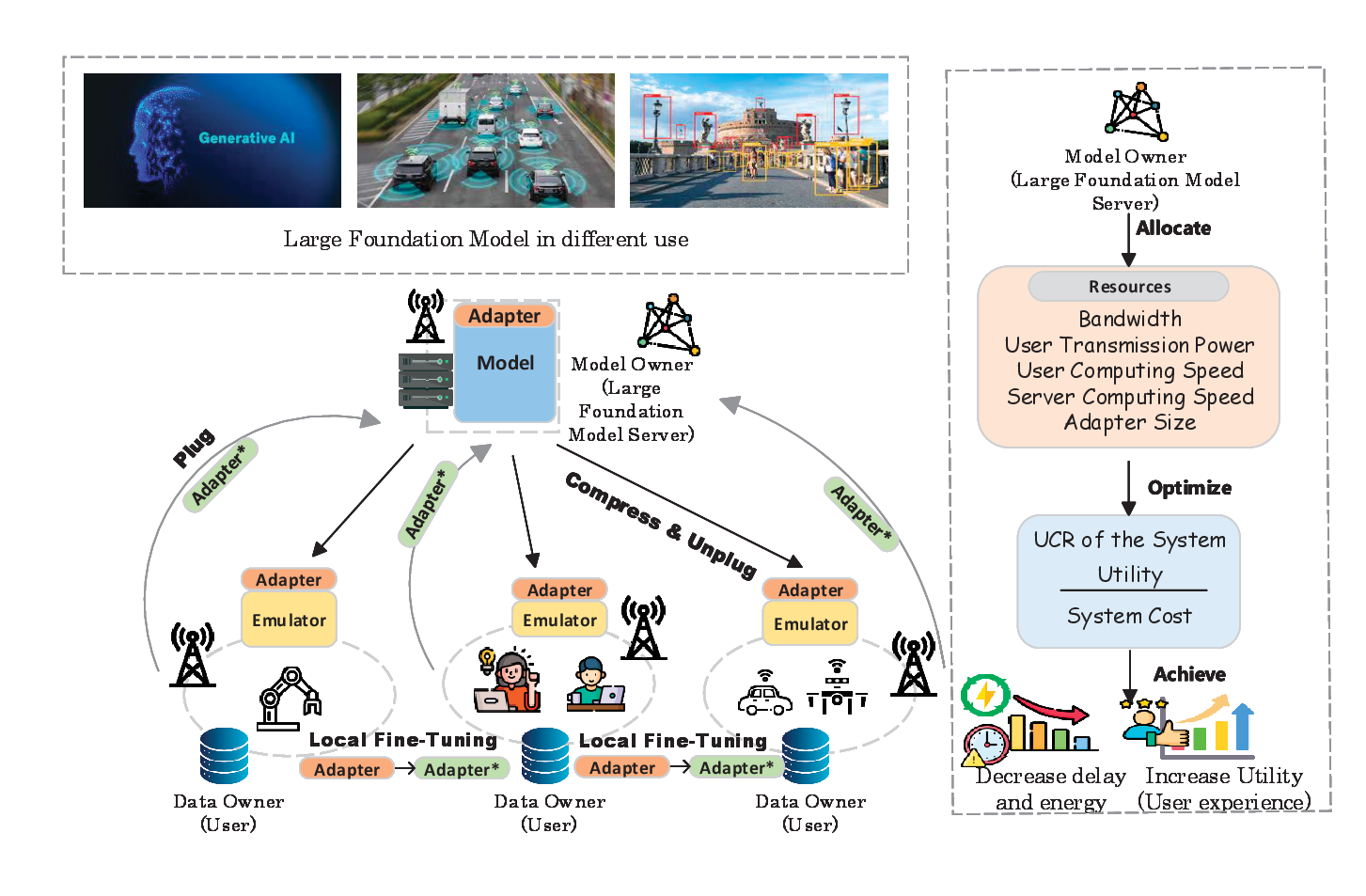} \vspace{-1cm}
    \caption{Optimizing the UCR of a large model system with $N$ data owners and a model owner through joint resource allocation optimization.} \vspace{-0.5cm}
    \label{fig_system}
\end{figure}

We account for a potential unauthorized eavesdropper $E^e$ who aims to eavesdrop on any of the data streams from the DOs and threaten our model. Given the utilization of FDMA in our system, the eavesdropper's ability to intercept the user information is linked to the user's allocated bandwidth and power (i.e., the eavesdropper can disrupt the user's communication by leveraging the bandwidth allocated to the user and the eavesdrop power equals the transmission power of DOs~\cite{yang2020deep}.) The eavesdropping rate for user $n$ is represented as
\begin{equation}
    r_{e}(b_{n}, p_n) = b_n \log_2(1+\frac{g_{n, e}p_n}{\sigma_{e}^2 b_{n}}), \label{rate_user_eav}
\end{equation}
where $\sigma_e^2$ is the power spectral density of Gaussian noise of $E^e$ and $g_{n, e}$ is the channel gain from each DO to $E^e$, respectively.

We use $r_{s,n}$ to represent the rate at which data can be securely transmitted without being intercepted by $E^e$ (i.e., the individual secrecy rate). Meanwhile, we can quantify the impact of eavesdropping on the user's transmission. The secrecy rate of user $n$ is defined as
\begin{equation}
    r_{s,n} (b_{n}, p_n) = \bigg[ r_{n}(b_{n}, p_n) - r_{e}(b_{n}, p_n) \bigg]^+, \label{rate_sec}
\end{equation}
where $[z]^+ = \text{max}(0, z)$. Once $r_{s,n} (b_{n}, p_n) = 0$, the transmission is considered as failure with no subsequent calculations. With the valid secrecy rate, the transmission time for uplink communication is 
\begin{equation}
    t^{\text{com}}_{n}(b_{n}, p_{n}, \phi_n) = \frac{\phi_n \mathcal{L} w}{r_{s,n}(b_n, p_n)}, \label{time_user_trans}
\end{equation}
where $w$ is the number of bits used to represent each parameter. The energy consumed for transmission can be represented as
$E^{\text{com}}_{n} = p_n T^{\text{com}}_{n}$, which is 
\begin{equation}
    E^{\text{com}}_{n}(b_n, p_{n}, \phi_n) = p_n \frac{\phi_n \mathcal{L} w}{r_{s,n}(b_n, p_n)}. \label{ener_user_trans}
\end{equation}

\textbf{Server integrating and processing the received adapter.}
At the server side, let $\boldsymbol{f^{\text{LMO}}} = [f^{\text{MO}}_1, f^{\text{LMO}}_2, \dots, f^{\text{DO}}_N]$ represent the GPU frequency (i.e., the computational resources) processing the transmitted back adapters from user $n$. Considering the parameter size of the server received adapter $\mathcal{A}_n$, the processing time at the server is 
\begin{equation}
    t^{\text{LMO:cmp}}_{}(f^{\text{LMO}}_n, \phi_n)  =\frac{\mathcal{B}_n(\phi_n)}{f^{\text{LMO}}_n} ,\label{time_server_proce}
\end{equation}
where $\mathcal{B}_n(\phi_n)$ is the total GPU cycles required for processing user $n$'s adapter. On the server side, the LMO integrated the adapter into the original model. Given that the parameter volume in the adapter $\mathcal{A}_n$ is much smaller than that in the emulator $\mathcal{E}_n$~\cite{sajjad2023effect}, the computational load at the model owner's end is primarily determined by the parameters in the $\mathcal{E}_n$ (not $\mathcal{E}^*_n$, since $\mathcal{E}^*_n$ is a subset of $\mathcal{E}_n$ and mainly used for DO to fine-tuning). Consequently, the computation function $\mathcal{B}_n(\phi_n)$ represents the number of computation cycles required for the LMO to integrate and process the data, decreases with $\phi_n$. Same as (\ref{c_n}), $\mathcal{B}_n(\phi_n)$ can be formalized as
\begin{equation}
    \mathcal{B}_n(\phi_n) = C_3 {\phi_n} ^{-C_4},
    \label{n_n}
\end{equation}
where $C_3, C_4 >0$. The energy consumed for LMO processing is 
\begin{equation}
    E^{\text{LMO:cmp}}_{n}(f^{\text{LMO}}_n, \phi_n) = k_m \mathcal{B}_n(\phi_n) (f^{\text{LMO}}_n)^2. \label{ener_server_process}
\end{equation}
In conclusion, the overall complete time of LMO completes user $n$'s  task, including both computation and
computation is 
\begin{align}
&t_n (b_n, p_n, \phi_n, f^{\text{DO}}_n, f^{\text{LMO}}_n) \label{time_total_n}\\ &= 
t^{\text{DO:cmp}}_{n}(f^{\text{DO}}_n, \phi_n) +  t^{\text{com}}_{n}(b_{n}, p_{n}, \phi_n)+  t^{\text{LMO:cmp}}_{}(f^{\text{LMO}}_n, \phi_n). \notag 
\end{align}
Then, the system delay is defined as the maximum delay experienced by any user in the network, which can be represented as
\begin{equation}
 \mathcal{T}(\boldsymbol{b},  \boldsymbol{p}, \boldsymbol{\phi}, \boldsymbol{f^{\text{DO}}}, \boldsymbol{f^{\text{MO}}})=  \text{max}_{n \in \mathcal{N}} \ t_n (b_n, p_n, \phi_n, f^{\text{DO}}_n, f^{\text{MO}}_n). \label{time_total}
\end{equation}
The total communication and computation energy consumption of the system is
\begin{align}
\mathcal{E} (\boldsymbol{b},  \boldsymbol{p}, \boldsymbol{\phi}, &\boldsymbol{f^{\text{DO}}}, \boldsymbol{f^{\text{LMO}}}) = \sum_{n \in \mathcal{N}}  E^{\text{DO:cmp}}_{n}(f^{\text{DO}}_n, \phi_n) \label{energy_total}\\& + \sum_{n \in \mathcal{N}} E^{\text{com}}_{n}(b_n, p_{n}, \phi_n) + \sum_{n \in \mathcal{N}} E^{\text{LMO:cmp}}_{n}(f^{\text{LMO}}_n, \phi_n). \notag 
\end{align}
The total cost of the system is determined by a weighted sum, encompassing both the system delay Eq.~(\ref{time_total}), and the energy consumption Eq.~(\ref{energy_total}):
\begin{align}
    \mathcal{S}(\boldsymbol{b},  &\boldsymbol{p}, \boldsymbol{\phi}, \boldsymbol{f^{\text{DO}}}, \boldsymbol{f^{\text{LMO}}}) \label{cost_total} \\&= c_t \mathcal{T}(\boldsymbol{b},  \boldsymbol{p}, \boldsymbol{\phi}, \boldsymbol{f^{\text{DO}}}, \boldsymbol{f^{\text{LMO}}}) + c_e \mathcal{E} (\boldsymbol{b},  \boldsymbol{p}, \boldsymbol{\phi}, \boldsymbol{f^{\text{DO}}}, \boldsymbol{f^{\text{LMO}}}). \notag
\end{align}

\textbf{Utility.} Based on the definition of~\cite{zhao2023human}, for each user $n$, we formulate the service experience score, which represents the `effectiveness' of the users as a function of $b_n, \phi_n$ and $f_n^{\text{DO}}$: $U_n(b_n, \phi_n, f_n^{\text{DO}})$, satisfying Assumption 1 below.
\begin{assumption} \label{assum_utility_def}
    Utility function $U_n(x)$ is non-decreasing in $x$ and concave in $x$, also twice differentiable.
\end{assumption}
\begin{proof}
    See Appendix~\ref{appen_assum_utility_def}.
\end{proof}
Following the above assumption, we can formulate the utility function as the normalization of three parameters, which is relative to their maximum values. The user score (utility) function can be formulated as
\begin{equation}
    U_n(b_n, \phi_n, f_n^{\text{DO}}) = \varpi \ln \big({1+ \phi_n + \frac{f_n^{\text{DO}}}{f_{n, \text{max}}^{\text{DO}}}} +  \frac{b_n}{b_{n,\text{max}}}\big), \label{uti_n}
\end{equation}
where $\varpi$ is used for normalization. This utility function demonstrates efficacy and sensitivity across the entire spectrum of optimal value ranges. The system utility, defined as the sum of all $N$ users' utilities, is given by
\begin{equation}
   \mathcal{U}(\boldsymbol{b}, \boldsymbol{\phi}, \boldsymbol{f^{\text{DO}}}) = \sum_{n \in \mathcal{N}} U_n (b_n, \phi_n, f_n^{\text{DO}}). \label{total_utility}
\end{equation}
To accomplish maximum economic efficiency of the system,  we aim to maximize the system's utility while minimizing the cost, the utility-consumption ratio (UCR) problem $\mathbb{P}_1$ can be formulated as
\begin{subequations} \label{prob_1}
\begin{align}
\mathbb{P}_1: \ &\max_{\boldsymbol{b},  \boldsymbol{p}, \boldsymbol{\phi}, \boldsymbol{f^{\text{DO}}}, \boldsymbol{f^{\text{MO}}}}  \frac{ \mathcal{U}(\boldsymbol{b}, \boldsymbol{\phi}, \boldsymbol{f^{\text{DO}}})}{\mathcal{S}(\boldsymbol{b},  \boldsymbol{p}, \boldsymbol{\phi}, \boldsymbol{f^{\text{DO}}}, \boldsymbol{f^{\text{MO}}})} 
\tag{\ref{prob_1}} \\
\text{subject to:} \notag \\
& b_{n} \leq b_{n,\text{max}}, \forall n \in \mathcal{N}, \label{cons_band}\\
& p_n \leq p_{n, \text{max}}, \forall n \in \mathcal{N}, \label{cons_pow_user} \\
& \phi_n \in (0,\phi_{n, \text{max}}], \forall n \in \mathcal{N}, \label{cons_data_sele} \\
& f^{\text{DO}}_n \leq f^{\text{DO}}_{n, \text{max}}, \forall n \in \mathcal{N}, \label{cons_freq_user} \\
& \sum_{n \in \mathcal{N}} f^{\text{MO}}_n \leq f^{\text{MO}}_{\text{max}}, \label{cons_freq_server} \\
& \mathcal{G}(\phi_n, \chi^e_n) \geq E^{\text{eff}}_{n, \text{min}}, \forall n \in \mathcal{N}, \label{cons_rate_effi} \\
& \mathcal{L}_{MSE}(\chi^e_n) \leq \text{Loss}_{n, \text{max}}, \forall n \in \mathcal{N}, \label{cons_loss} \\
& b_n, p_n \geq 0, \chi^e_n \in [0.1, 0.3, 0.5, 0.7, 0.9], \forall n \in \mathcal{N}. \label{cons_geq}
\end{align}
\end{subequations}
In Problem $\mathbb{P}_1$,~(\ref{cons_band}),~(\ref{cons_pow_user}), and (\ref{cons_freq_user}) ensures the bandwidth, power, and GPU frequencies for each transmit user must not exceed their respective maximum limits, which means the users operate within their resource capacities.~(\ref{cons_data_sele}) gives the limitation of the extraction rate.~(\ref{cons_freq_server}) limits the total resources LMO used for computation, and~(\ref{cons_rate_effi})~(\ref{cons_loss}) achieved a balance between the effectiveness and the computational efficiency of $\mathcal{E}^*_n$.

\section{Solution of the joint optimization problem} \label{sec_solu}
The problem defined in~(\ref{prob_1}) is a non-convex fractional programming problem classified as NP-hard~\cite{freund2001solving}. Since in~(\ref{prob_1}), the numerator maintains non-negativity, the denominator is strictly positive, and both are continuous functions, we use the Dinkelbach transform~\cite{shen2018fractional} to convert problem $\mathbb{P}_1$ to $\mathbb{P}_2(y)$ with introduce an additional variable $y$. Let $H_{\mathbb{P}_2}$ denote the objective function, then $H_{\mathbb{P}_2}$ and $\mathbb{P}_2$ are
\begin{align}
H_{\mathbb{P}_2}(\boldsymbol{b},  &\boldsymbol{p}, \boldsymbol{\phi}, \boldsymbol{f^{\text{DO}}}, \boldsymbol{f^{\text{MO}}}\ |\ y):\\ & = \text{numerator of (\ref{prob_1})}  - y \cdot \text{denominator of (\ref{prob_1})} \notag \\
\mathbb{P}_2(y): & \max_{\boldsymbol{b},  \boldsymbol{p}, \boldsymbol{\phi}, \boldsymbol{f^{\text{DO}}}, \boldsymbol{f^{\text{MO}}}} H_{\mathbb{P}_2}(\boldsymbol{b},  \boldsymbol{p}, \boldsymbol{\phi}, \boldsymbol{f^{\text{DO}}}, \boldsymbol{f^{\text{MO}}}, T \ |\ y) \label{prob_sub_1_3} \\
\text{subject to:} &~\text{(\ref{cons_band})} - \text{(\ref{cons_geq}).}\notag
\end{align}

{\color{red}



}

To address problem $\mathbb{P}_2(y)$, we define the variables [$\boldsymbol{b},  \boldsymbol{p}, \boldsymbol{\phi}, \boldsymbol{f^{\text{DO}}}, \\ \boldsymbol{f^{\text{MO}}}, T$] as $\boldsymbol{z}$ and express the objective function of $\mathbb{P}_2(y)$ as $ U(\boldsymbol{z}) - y \cdot S(\boldsymbol{z}) $. At the beginning of the iteration, we initiate with a feasible $\boldsymbol{z}^{(0)} $ and set $y^{(0)} = \frac{U(\boldsymbol{z}^{(0)})}{S(\boldsymbol{z}^{(0)})}$. Subsequently, we solve $ \mathbb{P}_2(y^{(0)})$, obtaining the solution $\boldsymbol{z}^{(1)}$, and update $y^{(1)} = \frac{U(\boldsymbol{z}^{(1)})}{S(\boldsymbol{z}^{(1)})}$. This iterative method proceeds as follows: for the $(k+1)$-th iteration, $y^{(k)}$ is set as $\frac{U(\boldsymbol{z}^{(k)})}{S(\boldsymbol{z}^{(k)})}$, and $\boldsymbol{z}^{(k+1)} $ is derived by solving $ \mathbb{P}_2(y^{(k)}) $. This procedure ensures convergence and maintains optimality. 

After addressing the Dinkelbach transformation, $H_{\mathbb{P}_2}$ still contains the `max' function of delay, we introduce an auxiliary variable $T$ to circumvent this difficulty. Furthermore, with known $y$, Problem $\mathbb{P}_2(y)$ still contains six variables, we decompose the problem into two subproblems, one has $\boldsymbol{b},  \boldsymbol{p}, \boldsymbol{f^{\text{DO}}}, \boldsymbol{f^{\text{MO}}}, T$ as optimization variables and the other has $\boldsymbol{\phi}$ as optimization variables.
\paragraph{Subproblem 1}
\begin{subequations} \label{prob_sub_1_1} 
\begin{align}
&\max_{\boldsymbol{\phi}} H_{\mathbb{P}_2}(\boldsymbol{b},  \boldsymbol{p}, \boldsymbol{\phi}, \boldsymbol{f^{\text{DO}}}, \boldsymbol{f^{\text{MO}}}, T \ |\ y) \tag{\ref{prob_sub_1_1}} \\
    &  ~\text{(\ref{cons_data_sele}),}~\text{(\ref{cons_rate_effi}).}  \notag
\end{align}
\end{subequations}
\paragraph{Subproblem 2}
\begin{subequations} \label{prob_sub_2_1}
\begin{align}
&\max_{\boldsymbol{b},  \boldsymbol{p}, \boldsymbol{f^{\text{DO}}}, \boldsymbol{f^{\text{MO}}}, T} H_{\mathbb{P}_2}(\boldsymbol{b},  \boldsymbol{p}, \boldsymbol{\phi}, \boldsymbol{f^{\text{DO}}}, \boldsymbol{f^{\text{MO}}}, T \ |\ y) \tag{\ref{prob_sub_2_1}} \\
&~\text{(\ref{cons_band}),}~\text{(\ref{cons_pow_user}),}~\text{(\ref{cons_freq_user}),}~\text{(\ref{cons_freq_server}),}~\text{(\ref{cons_geq}).} \notag\\
& T \geq t_n (b_n, p_n, \phi_n, f^{\text{DO}}_n, f^{\text{MO}}_n), \forall n \in \mathcal{N}\label{cons_new_time}.
\end{align}
\end{subequations}
The methodology for solving the problem is outlined as follows. At the beginning of each iteration, first calculate the minimal $\chi^{e^{-}}_{n}$ following constraint (\ref{cons_loss}), then we utilize Alternating Optimization (AO) with an initial value of $\boldsymbol{\phi}$ and alternately optimize (\ref{prob_sub_1_1}) and (\ref{prob_sub_2_1}). Additionally, the maximal value of $\chi^e_n$ can be calculated using $\chi^{e^{+}}_{n} = \big(\mathcal{G}(\phi_n, \chi^{e^{+}}_n) = E_{n, \text{min}}^{\text{eff}}\big)^{-1}$. Ultimately, the optimal $\chi^{e}_{n} = \text{ROUND}\big(\text{min}\big[\chi^{e^{-}}_{n}, \chi^{e^{+}}_{n} \big]$\big) to the nearest discrete value, as under identical transmission conditions, our primary concern lies with data security rather than exhaustive fine-tuning, which can be compensated for in subsequent rounds of transmission and training.

\subsection{Solution to Subproblem 1}
With fixed $\boldsymbol{b},  \boldsymbol{p}, \boldsymbol{f^{\text{DO}}}, \boldsymbol{f^{\text{MO}}}, T $ , we first optimize $\phi_n$ and $y$. With fixed initial $\chi^{e^{-}}_{n}$ and the monotonic increasing property of $\mathcal{G}(\phi_n, \chi^{e}_n)$ with respect to $\phi_n$, a minimal extraction rate of each DO can be derived from (\ref{cons_rate_effi}), which is
\begin{equation}
    \phi_n \geq \phi_{n,\text{min}}, \ \mathcal{G}(\phi_{n, \text{min}}, \chi^{e^{-}}_n) =  E^{\text{eff}}_{n, \text{min}}\label{cons_1_4_cha_effi}
\end{equation}
Subproblem (\ref{prob_sub_1_1}) can be reformulated as
\begin{subequations} \label{prob_4}
\begin{align}
\mathbb{P}_3: \min_{\boldsymbol{\phi}} & - \mathcal{U}(\boldsymbol{b}, \boldsymbol{\phi}, \boldsymbol{f^{\text{DO}}}) + y \cdot (c_e \mathcal{E} (\boldsymbol{b},  \boldsymbol{p}, \boldsymbol{\phi}, \boldsymbol{f^{\text{DO}}}, \boldsymbol{f^{\text{MO}}}) + c_t T)
\tag{\ref{prob_4}} \\
\text{subject to:} \notag \\
& \phi_{n,\text{min}} \leq \phi_n \leq \phi_{n,\text{max}}, \forall n \in \mathcal{N}, \label{cons_1_5_sele}\\
&~\text{(\ref{cons_new_time})} \notag
\end{align}
\end{subequations}
Since $\mathbb{P}_3$ is convex, the Karush-Kuhn-Tucker (KKT) conditions are the sufficient and necessary optimality
conditions for finding the optimal solution. We can write down the partial
Lagrangian function of problem  $\mathcal{L}_1(\boldsymbol{\phi}, \boldsymbol{\eta} \ | \ y, \boldsymbol{b},  \boldsymbol{p}, \boldsymbol{f^{\text{DO}}}, \boldsymbol{f^{\text{MO}}}, T)=$
\begin{align}
   &-\sum_{n \in \mathcal{N}}\ln \big({\theta_n + \frac{f_n^{\text{DO}}}{f_{n, \text{max}}^{\text{DO}}}} +  \frac{b_n}{b_{n,\text{max}}}\big) + yc_e\sum_{n \in \mathcal{N}}[k_n C_1 \phi_n ^{C_2} (f^{\text{DO}}_n)^2 \label{lag_1}\\
   &+ p_n \frac{\phi_n \mathcal{L} w}{r_{s,n}} + k_m C_3 \phi_n ^{-C_4} (f^{\text{MO}}_n)^2] \notag\\
   &+ \sum_{n \in \mathcal{N}} \eta_n \ [\text{max} \{\frac{C_1 \phi_n ^{C_2}}{f^{\text{DO}}_n} + \frac{\phi_n \mathcal{L} w}{r_{s,n}} + \frac{C_3 \phi_n ^{-C_4}}{f^{\text{MO}}_n} \} - T], \notag 
\end{align}
where $\boldsymbol{\eta} = [\eta_1, \eta_2, \dots, \eta_N]$ are Lagrange multipliers associate with constraint. \textbf{Theorem 1} is given to find the optimal $\boldsymbol{\phi}$. 
\begin{theorem} \label{theo_sub_1_solu}
The optimal solution of problem $\mathbb{P}_3$ is 
\begin{equation}
    \phi^*_n = \text{min}\{\text{max}\{\phi_{n, \text{min}}, \widetilde{\phi}_n(\eta_n \ | \ y, \boldsymbol{b},  \boldsymbol{p}, \boldsymbol{f^{\text{DO}}}, \boldsymbol{f^{\text{MO}}}, T)\}, \phi_{n, \text{max}}\}
\end{equation}
where $\eta_n$ needs to satisfy
\begin{align}
    \text{max} &\{\frac{C_1 (\widetilde{\phi}_n(\eta_n) |_{\phi_{n,\text{min}}}^{\phi_{n, \text{max}}} ) ^{C_2}}{f^{\text{DO}}_n} + \frac{(\widetilde{\phi}_n(\eta_n) |_{\phi_{n,\text{min}}}^{\phi_{n, \text{max}}} ) \mathcal{L} w}{r_{s,n}} \\ &+ \frac{C_3 (\widetilde{\phi}_n(\eta_n) |_{\phi_{n,\text{min}}}^{\phi_{n, \text{max}}} ) ^{-C_4}}{f^{\text{MO}}_n} \} = T \notag
\end{align}
with $\widetilde{\phi}_n(\eta_n) |_{a}^{b} := \text{max}(a, \text{min}( \widetilde{\phi}_n(\eta_n) ,b)) $.
\end{theorem}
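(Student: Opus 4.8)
The plan is to verify that $\mathbb{P}_3$ in (\ref{prob_4}) is a convex program whose objective and constraint set are both \emph{separable across users}, so that the KKT conditions — which are then necessary \emph{and} sufficient — can be solved one $\phi_n$ at a time, and the stated formula emerges as a clipped, implicitly defined stationary point. First I would check convexity in $\boldsymbol{\phi}$ with $\boldsymbol{b},\boldsymbol{p},\boldsymbol{f^{\text{DO}}},\boldsymbol{f^{\text{MO}}},T$ held fixed: $-\mathcal{U}$ is a sum of negative logarithms of functions affine in $\phi_n$ (cf.\ (\ref{uti_n})), the energy term $k_nC_1\phi_n^{C_2}(f_n^{\text{DO}})^2$ is convex since $C_2>1$ in (\ref{c_n}), $k_mC_3\phi_n^{-C_4}(f_n^{\text{MO}})^2$ is convex since $\phi_n^{-C_4}$ is convex on $(0,\infty)$ for $C_4>0$ in (\ref{n_n}), the transmission-energy term is linear in $\phi_n$, and $c_tT$ is constant here; likewise each $t_n$ is convex in $\phi_n$, so the feasible set (the box (\ref{cons_1_5_sele}) intersected with the sublevel sets $\{t_n\le T\}$ of (\ref{cons_new_time})) is convex. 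Assuming Slater's condition (a strictly feasible $\boldsymbol{\phi}$, which the choice of $T$ in the outer alternating loop provides), the KKT conditions for the Lagrangian $\mathcal{L}_1$ in (\ref{lag_1}), augmented by multipliers for the two box inequalities, characterize the global optimum.

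Next I would exploit separability: $\mathcal{L}_1$ splits into per-user summands, each depending only on $\phi_n$, $\eta_n$, and the two box multipliers, so the whole KKT system decouples over $n$. For fixed $\eta_n\ge 0$ (and fixed $y$ and the other blocks) I define $\widetilde\phi_n(\eta_n\mid\cdots)$ as the solution of the interior stationarity equation $\partial\mathcal{L}_1/\partial\phi_n=0$. Differentiating (\ref{lag_1}) term by term, this derivative equals a negative-log term $-1/(1+\phi_n+\tfrac{f_n^{\text{DO}}}{f_{n,\text{max}}^{\text{DO}}}+\tfrac{b_n}{b_{n,\text{max}}})$ (times $\varpi$), plus $yc_e$ times a combination of $\phi_n^{C_2-1}$ and $-\phi_n^{-C_4-1}$, plus $\eta_n\,t_n'(\phi_n)$ (the bracket multiplied by $\eta_n$ in (\ref{lag_1}) is exactly $\partial t_n/\partial\phi_n$). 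Each piece is strictly increasing in $\phi_n$ on $(0,\infty)$ — the log term because its denominator grows, the $\phi_n^{C_2-1}$ pieces because $C_2>1$, the $-\phi_n^{-C_4-1}$ pieces because $C_4>0$ — and the sum runs from $-\infty$ as $\phi_n\to 0^+$ to $+\infty$ as $\phi_n\to\infty$; hence $\widetilde\phi_n(\eta_n\mid\cdots)$ exists and is unique.

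Then I would handle the box constraint $[\phi_{n,\text{min}},\phi_{n,\text{max}}]$. Since each per-user Lagrangian is strictly convex in $\phi_n$ with the unique interior critical point $\widetilde\phi_n$, its minimizer over $[\phi_{n,\text{min}},\phi_{n,\text{max}}]$ is the projection $\min\{\max\{\phi_{n,\text{min}},\widetilde\phi_n(\eta_n)\},\phi_{n,\text{max}}\}$: if $\widetilde\phi_n$ lies inside, both box multipliers vanish; if $\widetilde\phi_n<\phi_{n,\text{min}}$ the derivative is positive throughout the interval, so the optimum is at $\phi_{n,\text{min}}$ with a positive lower multiplier; symmetrically at the upper end. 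This is precisely the outer $\min\{\max\{\cdots\}\}$ in the theorem, with the shorthand $\widetilde\phi_n(\eta_n)|_a^b=\max(a,\min(\widetilde\phi_n(\eta_n),b))$.

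Finally I would pin down $\eta_n$ through dual feasibility $\eta_n\ge 0$, primal feasibility $t_n\le T$, and complementary slackness $\eta_n(t_n-T)=0$. If the clipped $\widetilde\phi_n(0)$ already satisfies $t_n\le T$, set $\eta_n=0$; otherwise (\ref{cons_new_time}) binds and $\eta_n>0$ must solve $t_n\big(\widetilde\phi_n(\eta_n)|_{\phi_{n,\text{min}}}^{\phi_{n,\text{max}}}\big)=T$, the displayed equation in the theorem. The hard part is showing this equation is solvable. Implicit differentiation of $\partial\mathcal{L}_1/\partial\phi_n=0$ gives $\mathrm{sign}(d\widetilde\phi_n/d\eta_n)=\mathrm{sign}(-t_n'(\widetilde\phi_n))$ (the denominator being positive by convexity), so $\widetilde\phi_n(\eta_n)$ drifts monotonically toward the unconstrained minimizer $\phi_n^{\circ}$ of the convex function $t_n$ as $\eta_n$ increases; consequently $\eta_n\mapsto t_n\big(\widetilde\phi_n(\eta_n)|_{\phi_{n,\text{min}}}^{\phi_{n,\text{max}}}\big)$ is continuous and nonincreasing, decreasing from its $\eta_n=0$ value down to $\min_{[\phi_{n,\text{min}},\phi_{n,\text{max}}]}t_n$. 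Feasibility of $\mathbb{P}_3$ forces this limit to be $\le T$, so the intermediate value theorem yields the required $\eta_n$, and the strict monotonicities above make $\phi_n^*$ unique. Combining the four steps gives the claimed form of $\phi_n^*$.
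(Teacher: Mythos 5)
Your proposal follows essentially the same route as the paper's proof: apply the KKT conditions to the convex problem $\mathbb{P}_3$, use the monotonicity of $\partial\mathcal{L}_1/\partial\phi_n$ in $\phi_n$ to define the unique stationary point $\widetilde{\phi}_n(\eta_n)$, project it onto the box $[\phi_{n,\text{min}},\phi_{n,\text{max}}]$, and fix $\eta_n$ by complementary slackness with the delay constraint. The only difference is one of completeness — you additionally verify convexity and separability and give an intermediate-value argument for the solvability of the $\eta_n$ equation, details the paper's appendix leaves implicit — so the argument is correct and consistent with the paper's.
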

\begin{proof}
    See Appendix~\ref{appen_sub_1_solu}.
\end{proof}
\subsection{Solution to Subproblem 2}
With fixed $\boldsymbol{\phi}$, we need to handle the sum-of-ratio function $\sum_{n \in \mathcal{N}}\frac{p_n \phi_n \mathcal{L}w}{r_{s,n}}$. We utilize the fractional programming (FP) technique proposed in~\cite{zhao2023human}, Problem (\ref{prob_sub_2_1}) can be converted into
\begin{subequations} \label{prob_5}
\begin{align}
    \mathbb{P}_4(\boldsymbol{x}, \boldsymbol{\phi}, y): &\max_{\boldsymbol{b},  \boldsymbol{p}, \boldsymbol{f^{\text{DO}}}, \boldsymbol{f^{\text{MO}}}, T} F(\boldsymbol{b},  \boldsymbol{p}, \boldsymbol{f^{\text{DO}}}, \boldsymbol{f^{\text{MO}}}, T) \tag{\ref{prob_5}}\\&- yc_e \cdot \sum \bigg\{ [p_n \phi_n \mathcal{L} w]^2x_n +  \frac{1}{4[r_{s,n}]^2 x_n}\bigg\} \notag\\
    \text{subject to:} &~\text{(\ref{cons_band}),}~\text{(\ref{cons_pow_user}),}~\text{(\ref{cons_freq_user}),}~\text{(\ref{cons_freq_server}),}~\text{(\ref{cons_geq}),}~\text{(\ref{cons_new_time}),} \notag
\end{align}
\end{subequations}
where $F(\boldsymbol{b},  \boldsymbol{p}, \boldsymbol{f^{\text{DO}}}, \boldsymbol{f^{\text{MO}}}, T) = \mathcal{U}(\boldsymbol{b}, \boldsymbol{\phi}, \boldsymbol{f^{\text{DO}}}) - y \bigg[c_e \sum_{n \in \mathcal{N}}\bigg(k_n \mathcal{C}_n(\phi_n) \\ (f^{\text{DO}}_n)^2 + k_m \mathcal{B}_n(\phi_n) (f^{\text{MO}}_n)^2 + c_t T  \bigg]$. The introduced auxiliary variables $\boldsymbol{x}=[x_1, x_2, \dots, x_N] >0 $, and satisfy $[p_n \phi_n \mathcal{L} w]^2x_n +  \frac{1}{4[r_{s,n}]^2 x_n} = [\mathcal{M}_n(\boldsymbol{z})x_n+\frac{\mathcal{P}_n(\boldsymbol{z})}{x_n})]$, where $x_n = \sqrt{\frac{\mathcal{P}_n(\boldsymbol{z})}{\mathcal{M}_n(\boldsymbol{z})}}$. The iteration algorithm of finding $\boldsymbol{x}$ is the same as finding $y$ in (\ref{prob_sub_1_3}).

Since the secrecy rate (\ref{rate_sec}) is neither jointly convex nor concave with respect to $b_n$ and $p_n$ (i.e., the expression of the Hessian matrix of (\ref{rate_sec}) involves a mixture of multi-squared terms and differences of squares, indicating that the expression's sign can vary based on the relative sizes of parameters in $r_{s,n}$, and the presence of both positive and negative components in the numerator suggests that the function may not be uniformly convex or concave across its entire domain). We utilize the Successive Convex Approximation (SCA) method to approximate (\ref{rate_sec}) at ($b_n^{(k+1)}$, $p_n^{(k+1)}$). Assume $b_n^{(k)}$, $p_n^{(k)}$ are the current estimate of $b_n$ and $p_n$ at the $k$-th iteration, we can linearize (\ref{rate_sec}) using the first-order Taylor expansion, the linear approximation can be expressed as 
\begin{align}
    \widetilde{r}_{s,n} (b_{n}, p_n) = & r_{s,n} (b_{n}^{k}, p_n^{k}) + \frac{\partial}{\partial b_n}\big[r_{s,n} (b_{n}^{k}, p_n^{k}) \big](b_n - b_n^{(k)}) \\ & + \frac{\partial}{\partial p_n}\big[r_{s,n} (b_{n}^{k}, p_n^{k}) \big](p_n - p_n^{(k)}), \notag
\end{align}
where $b_n^{(k)}$, $p_n^{(k)}$ are intermediate values and regraded as constant. The SCA process involves iteratively solving this optimization problem, updating the estimates ($b_n^{(k+1)}$, $p_n^{(k+1)}$), and repeating until convergence. Then we can say $r_{n,s}(b_n, p_n)$ is convex. 

After the FP and SCA, $\mathbb{P}_4$ is a convex problem. The Lagrange function $\mathcal{L}_2(\boldsymbol{b_e}, \boldsymbol{p}, \boldsymbol{f^{\text{DO}}}, \boldsymbol{f^{\text{MO}}}, T, \boldsymbol{\alpha}, \boldsymbol{\beta}, \boldsymbol{\delta}, \zeta, \boldsymbol{\rho} \ | \ \boldsymbol{x}, \boldsymbol{\phi}, y ) = $
\begin{align}
    & - F(\boldsymbol{b},  \boldsymbol{p}, \boldsymbol{f^{\text{DO}}}, \boldsymbol{f^{\text{MO}}}, T) + yc_e \cdot \sum \bigg\{ [p_n \phi_n \mathcal{L} w]^2x_n +  \frac{1}{4[\widetilde{r}_{s,n}]^2 x_n}\bigg\} \label{l_2} \\ & + \sum [\alpha_n \cdot (b_n - b_{n, \text{max}})] +  \sum [\beta_n \cdot (p_n - p_{n, \text{max}})] \notag \\ & +\sum [\delta_n \cdot (f_n^{\text{DO}}-f_{n, \text{max}}^{\text{DO}})] + \zeta \cdot (\sum_{n \in \mathcal{N}} f^{\text{MO}}_n - f^{\text{MO}}_{\text{max}}) \notag \\ 
    &+\sum [\rho_n \cdot (t_n (b_n, p_n, f^{\text{DO}}_n, f^{\text{MO}}_n) - T)]. \notag 
\end{align}
We present the KKT conditions as
\begin{itemize}
\item \textbf{Stationary:} (The value of each equation = 0)
    \begin{equation}
        \frac{\partial \mathcal{L}_2 }{\partial b_n} = -\frac{b_{n,\text{max}}}{b_n} - \frac{yc_e}{2x_n(\widetilde{r}_{s,n})^3}  \frac{\partial \widetilde{r}_{s,n}}{\partial b_n} + \alpha_n - \rho_n \frac{\phi_n \mathcal{L} w}{(\widetilde{r}_{s,n})^2}  \frac{\partial \widetilde{r}_{s,n}}{\partial b_n} \label{kkt_1}
    \end{equation}
    \begin{equation}
          2yc_ex_n(\phi_n \mathcal{L} w)^2 p_n - \bigg(\frac{\partial \widetilde{r}_{s,n}}{\partial p_n} + \rho_n \frac{\phi_n \mathcal{L} w}{(\widetilde{r}_{s,n})^2} \bigg)\frac{yc_e}{2x_n(\widetilde{r}_{s,n})^3}  + \beta_n  \label{kkt_2}
    \end{equation}
    \begin{equation}
         \frac{\partial \mathcal{L}_2 }{\partial f_n^{\text{DO}}} =  -\frac{f_{n,\text{max}}^{\text{DO}}}{f_n^{\text{DO}}} + 2yc_ek_n\mathcal{C}_n(\phi_n)f_n^{\text{DO}} + \delta_n - \rho_n\frac{\mathcal{C}_n( \phi_n)}{(f_n^{\text{DO}})^2} \label{kkt_3}
    \end{equation}
    \begin{equation}
        \frac{\partial \mathcal{L}_2 }{\partial f_n^{\text{MO}}} =  2yc_ek_m\mathcal{B}_n(\phi_n)f_n^{\text{MO}} + \zeta - \rho_n\frac{\mathcal{B}_n(\phi_n)}{(f_n^{\text{MO}})^2} \label{kkt_4}
    \end{equation}
    \begin{equation}
        \frac{\partial \mathcal{L}_2 }{\partial T} = \sum_{n \in \mathcal{N}} \rho_n - yc_t \label{kkt_5}
    \end{equation}
\item \textbf{Complementary Slackness:} $\forall \in \mathcal{N}$
\begin{align}
    &\alpha_n \cdot (b_{n} - b_{n, \text{max}}) = 0 \label{kkt_6}\\ 
    & \beta_n \cdot (p_n - p_{n, \text{max}}) = 0,   \label{kkt_7}\\
    & \delta_n \cdot (f_n^{\text{DO}}-f_{n, \text{max}}^{\text{DO}}) = 0, \label{kkt_8} \\
    & \zeta \cdot (\sum_{n \in \mathcal{N}} f^{\text{MO}}_n - f^{\text{MO}}_{\text{max}}) = 0 \label{kkt_9}\\
    & \rho_n \cdot (t_n (b_n, p_n, f^{\text{DO}}_n, f^{\text{MO}}_n) - T) = 0, \label{kkt_10}
\end{align}
\item \textbf{Dual Feasibility:}
$~\text{(\ref{cons_band}),}~\text{(\ref{cons_pow_user}),}~\text{(\ref{cons_freq_user}),}~\text{(\ref{cons_freq_server}),}~\text{(\ref{cons_geq}),}~\text{(\ref{cons_new_time}).}$
\item \textbf{Primal Feasibility:} $\forall n \in \mathcal{N}$
\begin{align}
\begin{array}{ll}
     &\text{(\ref{kkt_11}a): } \alpha_n \geq 0;\text{(\ref{kkt_11}b): } \beta_n \geq 0; 
     \text{(\ref{kkt_11}c): } \delta_n \geq 0; \\ &\text{(\ref{kkt_11}d): } \zeta \geq 0;
     \text{(\ref{kkt_11}e): } \rho_n \geq 0,
\end{array} \label{kkt_11}
\end{align}
\end{itemize}
where (\ref{kkt_2}) is $\frac{\partial \mathcal{L}_2 }{\partial p_n}$. Using (\ref{kkt_1})-(\ref{kkt_11}), we develop a three-step algorithm with known ``$y, x, \boldsymbol{\phi}$" (we denote as $\Delta$ below). \textbf{Step 1:} Find $(\boldsymbol{f^{\text{DO}}}, \boldsymbol{f^{\text{DO}}}, \boldsymbol{\delta}, \zeta \ |\ \Delta)$ with respect to $\boldsymbol{\rho}$. \textbf{Step 2:} Find $(\boldsymbol{b}, \boldsymbol{p}, \boldsymbol{\alpha}, \boldsymbol{\beta} \ |\ \Delta)$ with respect to $\boldsymbol{\rho}$. \textbf{Step 3:} Find$(T, \boldsymbol{\rho}\ |\ \Delta)$.

\textbf{Step 1.} From (\ref{kkt_3}), we could derive the solution of $f_n^{\text{DO}}$ with represented by $(\delta_n, \rho_n)$ defined as $\widehat{f}_n^{\text{DO}}(\delta_n, \rho_n \ |\ \Delta)$, based on the value of $\widehat{f}_n^{\text{DO}}(\delta_n, \rho_n \ |\ \Delta)$, we have the following discussion:
\begin{itemize}
    \item \textbf{Case 1:} If $\widehat{f}_n^{\text{DO}}(\delta_n, \rho_n \ |\ \Delta) < f_{n,\text{max}}^{\text{DO}}$ 

    The condition of this case can also be represented as  \\$\widehat{f}_n^{\text{DO}}(\delta_n, \rho_n \ |\ \Delta) \geq \sqrt[3]{\frac{\rho_n}{2yc_ek_n}}$. In this case, based on (\ref{kkt_8}), we can simply set $\delta_n = 0$. We have $f_n^{\text{DO}} = \widehat{f}_n^{\text{DO}}(0, \rho_n \ |\ \Delta)$
    \item \textbf{Case 2:} If $\widehat{f}_n^{\text{DO}}(\delta_n, \rho_n \ |\ \Delta) \geq f_{n,\text{max}}^{\text{DO}}$ 

    In this case, the condition can be written as $\widehat{f}_n^{\text{DO}}(\delta_n, \rho_n \ |\ \Delta) < \sqrt[3]{\frac{\rho_n}{2yc_ek_n}}$. Since $\delta >0$, according to (\ref{kkt_8}), $f_n^{\text{DO}} = f_{n, \text{max}}^{\text{DO}}$, substituting it into (\ref{kkt_3}), we can obtain the equation of $\delta_n$ represented by $\rho_n$, specifically $\widehat{\delta}_n (\rho_n \ |\ \Delta)|_{{f_n^{\text{DO}} = f_{n, \text{max}}^{\text{DO}}}} = 1+\rho_n\frac{\mathcal{C}_n(\phi_n)}{(f_{n, \text{max}}^{\text{DO}})^2}  -2yc_ek_n\mathcal{C}_n(\phi_n) f_{n, \text{max}}^{\text{DO}}$.
\end{itemize}
Summarize both cases, and the conclusion could be derived:
\begin{equation}
    \widetilde{f}_n^{\text{DO}}(\rho_n \ |\ \Delta)= \text{min}\{f_{n,\text{max}}^{\text{DO}}, \widehat{f}_n^{\text{DO}}(0, \rho_n \ |\ \Delta)\} ,\label{opt_f_do}
\end{equation}
\begin{align} \label{opt_del}
    \widetilde{\delta}_n(\rho_n \ |\ \Delta)=\left\{
        \begin{array}{ll}
             0, &   \widetilde{f}_n^{\text{DO}}(0, \rho_n \ |\ \Delta) < f_{n,\text{max}}^{\text{DO}}\\
             \widehat{\delta}_n (\rho_n \ |\ \Delta), &  \text{others} 
        \end{array}
    \right. \nonumber
\end{align}
Similarly, The solution of $f_n^{\text{MO}} = \widehat{f}_n^{\text{MO}}(\zeta, \rho_n \ |\ \Delta)$ can be obtained from (\ref{kkt_4}) as a function of $(\zeta, \rho_n)$, the discussion is
\begin{itemize}
    \item \textbf{Case 1.} If $\sum_{n \in \mathcal{N}}\widehat{f}_n^{\text{MO}}(\zeta, \rho_n \ |\ \Delta) < f_{\text{max}}^{\text{MO}}$

    The condition can also be written as $\sum_{n \in \mathcal{N}}\sqrt[3]{\frac{\rho_n}{2yc_ek_m}} \leq f_{\text{max}}^{\text{MO}}$. In this case, according to~(\ref{kkt_4})~(\ref{kkt_9}), $\zeta=0$. We have $f_n^{\text{MO}} = \widehat{f}_n^{\text{MO}}(0, \rho_n \ |\ \Delta)$. 
    \item \textbf{Case 2.} If $\sum_{n \in \mathcal{N}}\widehat{f}_n^{\text{MO}}(\zeta, \rho_n \ |\ \Delta) \geq f_{\text{max}}^{\text{MO}}$

    The condition means $\sum_{n \in \mathcal{N}}\sqrt[3]{\frac{\rho_n}{2yc_ek_m}} > f_{\text{max}}^{\text{MO}}$, since $\zeta >0$, according to (\ref{kkt_9}), $\sum_{n \in \mathcal{N}} f^{\text{MO}}_n = f^{\text{MO}}_{\text{max}}$, substituting it into (\ref{kkt_4}), we can obtain the equation of $\zeta$ represented by $(\rho_n)$, $\widehat{\zeta} (\rho_n \ |\ \Delta)|_{\sum_{n \in \mathcal{N}}f_n^{\text{MO}} = f_{\text{max}}^{\text{MO}}}$, then use (\ref{kkt_4}) get the value of $f_n^{\text{MO}}$, $f_n^{\text{MO}} = \widehat{f}_n^{\text{MO}}(\widehat{\zeta} (\rho_n \ |\ \Delta), \rho_n \ |\ \Delta)$   
\end{itemize}
Summarize both cases, and the conclusion could be derived:
\begin{align}
   & \widetilde{f}_n^{\text{MO}}(\rho_n \ |\ \Delta)  \\& = \text{min}\{\widehat{f}_n^{\text{MO}}(\widehat{\zeta} (\rho_n \ |\ \Delta), \rho_n \ |\ \Delta), \widehat{f}_n^{\text{MO}}(0, \rho_n \ |\ \Delta)\} \notag
\end{align}
\begin{align}
    \widetilde{\zeta}(\rho_n \ |\ \Delta)=\left\{
        \begin{array}{ll}
             0, &   \sum_{n \in \mathcal{N}}\widetilde{f}_n^{\text{MO}}(0, \rho_n \ |\ \Delta) < f_{\text{max}}^{\text{MO}}\\
             \widehat{\zeta}_n (\rho_n \ |\ \Delta), &  \text{others}
        \end{array}
    \right. \nonumber
\end{align}
\textbf{Step 2.} First, from~(\ref{kkt_1}), since all the functions except $\alpha_n$ are negative and the result is $0$, $\alpha_n >0$ and which result in $b_n = b_{n, \text{max}}$ from~(\ref{kkt_6}). Substituting $b_n = b_{n, \text{max}}$ into~(\ref{kkt_1}), we have
\begin{equation}
    \bigg[- \frac{yc_e}{2x_n(\widetilde{r}_{s,n})^3}  \frac{\partial \widetilde{r}_{s,n}}{\partial b_n} - \rho_n \frac{\phi_n \mathcal{L} w}{(\widetilde{r}_{s,n})^2}  \frac{\partial \widetilde{r}_{s,n}}{\partial b_n} \bigg]_{b_n = b_{n, \text{max}}}  + \alpha_n = 1 \label{change_kkt_1}
\end{equation}
The solution of $\alpha_n$ in (\ref{change_kkt_1}) can be denoted as a function of $p_n$ and Lagrange multipliers $\rho_n$ which is $\widehat{\alpha}_n(p_n, \rho_n \ |\ \Delta)$. Substituting $b_n = b_{n, \text{max}}$ into (\ref{kkt_2}), we can obtain the following equation:
\begin{align}
    &\bigg[  -\frac{yc_e}{2x_n(\widetilde{r}_{s,n})^3} \frac{\partial \widetilde{r}_{s,n}}{\partial p_n} - \rho_n \frac{\phi_n \mathcal{L} w}{(\widetilde{r}_{s,n})^2}  \frac{\partial \widetilde{r}_{s,n}}{\partial p_n} \bigg]_{b_n = b_{n, \text{max}}} \label{change_kkt_2}\\
    &+ 2yc_ex_n(\phi_n \mathcal{L} w)^2 p_n + \beta_n = 0. \notag
\end{align}
From (\ref{change_kkt_2}), we could derive the solution of $p_n$  represented by $(\beta_n, \rho_n)$ defined as $ \widehat{p}_n(\beta_n, \rho_n \ |\ \Delta)$. Summarize (\ref{change_kkt_1})(\ref{change_kkt_2}), we can obtain the optimal value of $\boldsymbol{b}^*$, and the expression of $\boldsymbol{p}, \boldsymbol{\alpha}$ are as follows
\begin{align}
    &\left\{\hspace{-2pt}
    \begin{array}{l}
         b_n^* = b_{n, max} \\
         \widehat{p}_n (\beta_n,  \rho_n \ |\ \Delta)= max \{\check{p}_n(\beta,  \rho_n \ |\ \Delta), 0  \}, \\
         \widehat{\alpha}_n (p_n, \rho_n \ |\ \Delta)= \check{\alpha}_n(\widetilde{p}_n (\beta_n,  \rho_n \ |\ \Delta), \rho_n \ |\ \Delta),  
    \end{array}
    \right. \nonumber \\
\end{align}
Since the expression of $\widehat{\alpha}_n (p_n, \rho_n \ |\ \Delta)$ contains $\boldsymbol{p}$, which means we can only analyze and derive the value of $\boldsymbol{\alpha}$ after we get $\boldsymbol{p}$, for $\widehat{p}_n (\beta_n, \rho_n \ |\ \Delta)$, we can have the following discussion
\begin{itemize}
    \item \textbf{Case 1.} If $\widehat{p}_n (\beta_n,  \rho_n \ |\ \Delta) < p_{n, \text{max}}$

    In this case, according to (\ref{kkt_7}), we can simply set $\beta_n = 0$. We have $p_n = \widehat{p}_n(0, \rho_n \ |\ \Delta)$ and constraints (\ref{kkt_2}), (\ref{cons_pow_user}), (\ref{kkt_11}b) are satisfied.
    \item \textbf{Case 2.} If $\widehat{p}_n (\beta_n,  \rho_n \ |\ \Delta) \geq p_{n, \text{max}}$

    In this case, we can denote $\beta_n >0$ from (\ref{kkt_7}), and $p_n = p_{n, \text{max}}$. Substituting $p_n = p_{n, \text{max}}$ into (\ref{change_kkt_2}), we can derive the solution of $\beta_n$  with represented by $\rho_n$ defined as $\widehat{\beta}_n(\rho_n \ |\ \Delta) = \bigg[ \frac{yc_e}{2x_n(\widetilde{r}_{s,n})^3} \frac{\partial \widetilde{r}_{s,n}}{\partial p_n} - \rho_n \frac{\phi_n \mathcal{L} w}{(\widetilde{r}_{s,n})^2}  \frac{\partial \widetilde{r}_{s,n}}{\partial p_n} \bigg]_{(b_n, p_n = b_{n, \text{max}}, p_{n, \text{max}})} \\- 2yc_ex_n(\phi_n \mathcal{L} w)^2 p_n$. 
\end{itemize}
Summarize both cases, and the conclusion could be derived:
\begin{equation}
    \widetilde{p}_n(\rho_n \ |\ \Delta)= \text{min}\{p_{n,\text{max}}, \text{max}\{ 0, \widehat{p}_n(0, \rho_n \ |\ \Delta)\}\}
\end{equation}
\begin{align}
    \widetilde{\beta}(\rho_n \ |\ \Delta)=\left\{
        \begin{array}{ll}
             0, &   \widehat{p}_n (\beta_n,  \rho_n \ |\ \Delta) < p_{n, \text{max}}\\
             \widehat{\beta}_n (\rho_n \ |\ \Delta), &  \text{others}
        \end{array}
    \right. \nonumber
\end{align}
So $\widehat{\alpha}_n (p_n, \rho_n \ |\ \Delta)$ can be represented using (\ref{change_kkt_1}), $\widetilde{\alpha}_n(\rho_n \ |\ \Delta) = 1 + \bigg[\frac{yc_e}{2x_n(\widetilde{r}_{s,n})^3}  \frac{\partial \widetilde{r}_{s,n}}{\partial b_n} + \rho_n \frac{\phi_n \mathcal{L} w}{(\widetilde{r}_{s,n})^2}  \frac{\partial \widetilde{r}_{s,n}}{\partial b_n} \bigg]_{(b_n = b_{n, \text{max}}, p_n = \widetilde{p}_n(\rho_n \ |\ \Delta))}$.

After \textbf{Step 1} and \textbf{Step 2}, we derive the solutions of all the optimal variables except $T$ related to the Lagrange multipliers $\boldsymbol{\rho}$, and we continue to handle the last constraint.

\textbf{Step 3.} With constraint and our previous results, we can denote the updated constraints represented by $(T, \boldsymbol{\rho})$ are as follows
\begin{align}
     \text{From (\ref{cons_new_time}):}~& t_n (b_{n, \text{max}}, \widetilde{p}_n(\rho_n \ |\ \Delta),  \phi^*_n,  \widetilde{f}_n^{\text{DO}}(\rho_n \ |\ \Delta), \notag\\
    &\widetilde{f}_n^{\text{MO}}(\rho_n \ |\ \Delta)) \leq T \label{step_3_1}\\
     \text{From (\ref{kkt_10}):}~ &\rho_n \cdot (t_n (b_{n, \text{max}}, \widetilde{p}_n(\rho_n \ |\ \Delta),  \phi^*_n,  \widetilde{f}_n^{\text{DO}}(\rho_n \ |\ \Delta), \notag \\ &\widetilde{f}_n^{\text{MO}}(\rho_n \ |\ \Delta)) - T) = 0.\label{step_3_2}
\end{align}
We can have the following discussion
\begin{itemize}
    \item \textbf{Case 1.} If $\rho_n = 0$ satisfies constraint (\ref{step_3_1})
    
    In this case, we can simply set $\rho_n = 0$ and we can derive the value of $p_n$, $f_n^{\text{DO}}$ and $f_n^{\text{MO}}$ (i.e., all the optimization variables except $T$), a lower bound of $T$ can be derived since $t_n (b_{n, \text{max}}, \widetilde{p}_n(\rho_n \ |\ \Delta),  \phi^*_n,  \widetilde{f}_n^{\text{DO}}(\rho_n \ |\ \Delta),
    \widetilde{f}_n^{\text{MO}}(\rho_n \ |\ \Delta)) < T$.
    \item \textbf{Case 2.} If $\rho_n = 0$ violates constraint (\ref{step_3_1})

    In this case, $\rho_n >0$ and $t_n (b_{n, \text{max}}, \widetilde{p}_n(\rho_n \ |\ \Delta),  \phi^*_n,  \widetilde{f}_n^{\text{DO}}(\rho_n \ |\ \Delta), \\
    \widetilde{f}_n^{\text{MO}}(\rho_n \ |\ \Delta))-T=0$.

\end{itemize}
Summarize both cases, and the conclusion could be derived
\begin{align} 
    \widetilde{\rho}_n(T \ |\ \Delta)=\left\{
        \begin{array}{ll}
             0, &   t_n (b_{n, \text{max}}, \widetilde{p}_n(\rho_n \ |\ \Delta),  \phi^*_n, \\ &\widetilde{f}_n^{\text{DO}}(\rho_n \ |\ \Delta), \widetilde{f}_n^{\text{MO}}(\rho_n \ |\ \Delta)) < T\\
             \widehat{\rho}_n(T \ |\ \Delta), &  \text{others} 
        \end{array}
    \right. \nonumber
\end{align}
Since parameter $T$ will be the last parameter to derive, which means we already obtained the value of $T$ when we discuss according to the different situation of $\boldsymbol{\rho}$. In the above summarization, the unknown parameter $\boldsymbol{\rho}$ occurs on both sides of the equation, we can propose a new function $R_n(\rho_n \ |\ T, \Delta)$ which satisfies
\begin{align}
    & R_n(\rho_n \ |\ T, \Delta) = -\rho_n, \\
    & \text{if} \ t_n (b_{n, \text{max}}, \widetilde{p}_n(\rho_n \ |\ \Delta),  \phi^*_n, \widetilde{f}_n^{\text{DO}}(\rho_n \ |\ \Delta), \widetilde{f}_n^{\text{MO}}(\rho_n \ |\ \Delta)) < T \notag
\end{align}
\begin{align}
    R_n(\rho_n \ |\ T, \Delta) &= t_n (b_{n, \text{max}}, \widetilde{p}_n(\rho_n \ |\ \Delta),  \phi^*_n, \widetilde{f}_n^{\text{DO}}(\rho_n \ |\ \Delta), \\ &\widetilde{f}_n^{\text{MO}}(\rho_n \ |\ \Delta)) -T,~ \text{if} \ \text{others}. \notag
\end{align}
Since~\cite{zhao2023human} proves that $R_n(\rho_n \ |\ T, \Delta)$ is non-increasing with the increment of $\rho_n$. We can then utilize the multivariate bisection algorithm proposed in~\cite{galvan2017multivariate} and~\cite{zhao2023human} to derive the optimal $\boldsymbol{\rho}$ and $T$.

The complete resource allocation algorithm in Algorithm~\ref{alg_2}.
\begin{algorithm} 
\caption{Resource Allocation Algorithm}
\label{alg_2}
Initialize $ sol^{(0)} = (\boldsymbol{b}^{(0)}, \boldsymbol{p}^{(0)}, \boldsymbol{\phi}^{(0)}, \boldsymbol{(f^{\text{DO}}})^{(0)}, \boldsymbol{(f^{\text{MO}}})^{(0)}, T^{(0)})$, iteration number $j=0$.\\
Calculate $\boldsymbol{x}^{(0)} = \sqrt{\frac{\mathcal{P}_n(\boldsymbol{z})^{(0)}}{\mathcal{M}_n(\boldsymbol{z})^{(0)}}}$, $y^{(0)} = \frac{U(\boldsymbol{z}^{(0)})}{S(\boldsymbol{z}^{(0)})}$, $\boldsymbol{\chi}^{e^{-}(0)}$\\
\Repeat{Convergence or the number of iterations achieves maximum $J$}{
Solve Subproblem 1. Obtain $\boldsymbol{\phi}^{(j+1)}$.\\
Solve Subproblem 2 and obtain $ (\boldsymbol{b}^{(j+1)}, \boldsymbol{p}^{(j+1)},\boldsymbol{(f^{\text{DO}}})^{(j+1)}, \boldsymbol{(f^{\text{MO}}})^{(j+1)}, T^{(j+1)})$.\\
$ sol^{(j+1)} = (\boldsymbol{p}^{(j+1)}, \boldsymbol{B}^{(j+1)}, \boldsymbol{f}^{(j+1)})$.\\
Calculate $\boldsymbol{\chi}^{e^{+} (j+1)}$ and obtain $\boldsymbol{\chi}^{e (j+1)}$. \\
Update $\bold{x}^{(j+1)}$, $y^{(j+1)}$ \\
Set $j \leftarrow j+1$.
}
\end{algorithm}

\section{Experimental results} \label{sec_exp}

In this section, we report the experimental results in detail.

\subsection{Parameter Setting}
In configuring the parameters for our communication system, we use specific settings based on established models and empirical data. The path loss between the large model server and each downstream user is quantified by the equation $128.1 + 37.6 \log(\text{distance})$ dB, where the distance is measured in kilometers. This model also incorporates a standard deviation of 8 dB for shadow fading. The power spectral density of Gaussian noise is set at $-174$ dBm/Hz. Focusing on computational capacities, the maximum GPU computation frequency for each user, $f_{n, \text{max}}^{\text{DO}}$, is set at 7 GHz (utilizing four NVIDIA GeForce RTX 3060 units). For the server, the maximum GPU computation frequency, $f_{\text{max}}^{\text{MO}}$ is set as 100 GHz. The effective switched capacitance parameters, $k_n$ and $k_m$, are both fixed at $10^{-27}$. As for the transmission capabilities, the maximum transmit power of mobile users, $p_{n, \text{max}}$, is 0.2W, while the extraction rates $\phi_n$ vary within the range of $[0, 1]$. The total parameter size considered in our model is $14M$. Lastly, the sum of the cost coefficients $c_e + c_t=1$. 

\subsection{Performance Comparison with baselines}
We consider three different baseline methods in the experiment compared with our proposed method, \textbf{1. Average allocation.} Set each $b_n = b_{n, \text{max}}, p_n = p_{n, \text{max}}, f_n^{\text{DO}} = f_{n, \text{max}}^{\text{DO}}, f_n^{\text{MO}} = \frac{f_{\text{max}}^{\text{MO}}}{N}$ and $\phi_n = 0.5$. \textbf{2. Optimize $\boldsymbol{b}, \boldsymbol{p}, \boldsymbol{s}$ only.} Set each $f_n^{\text{DO}} = f_{n, \text{max}}^{\text{DO}}, f_n^{\text{MO}} = \frac{f_{\text{max}}^{\text{MO}}}{N}$. \textbf{3. Optimize $\boldsymbol{f^{\text{DO}}}, \boldsymbol{f^{\text{MO}}}$ only.} Set each $b_n = b_{n, \text{max}}, p_n = p_{n, \text{max}}$ and $\phi_n = 0.5$.
\begin{figure*}[ht]
    \centering
    \begin{subfigure}{0.33\textwidth}
        \includegraphics[width=0.9\linewidth]{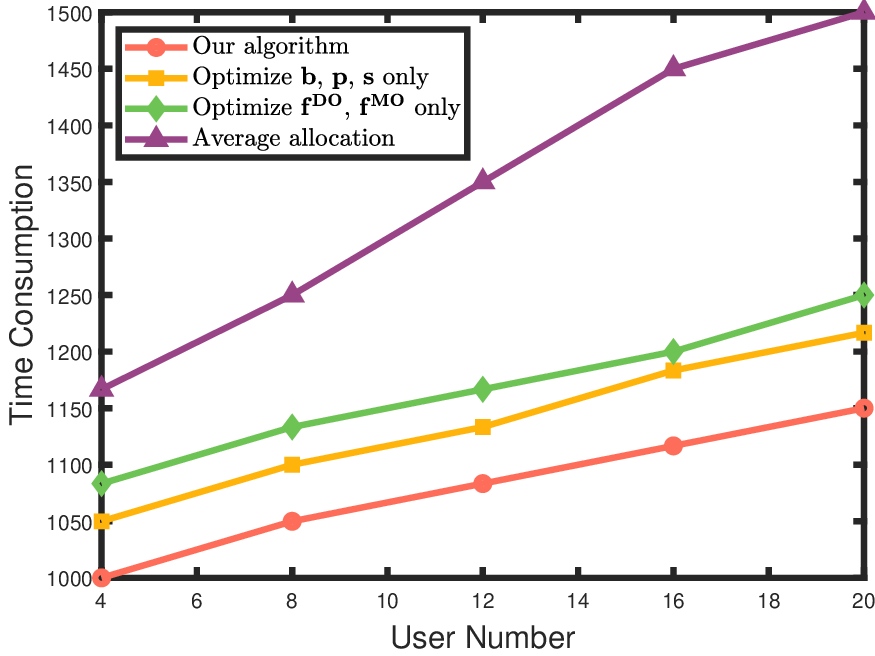}
        \caption{Time consumption versus user number}
        \label{fig_user_time}
    \end{subfigure}%
    \begin{subfigure}{0.33\textwidth}
        \includegraphics[width=0.9\linewidth]{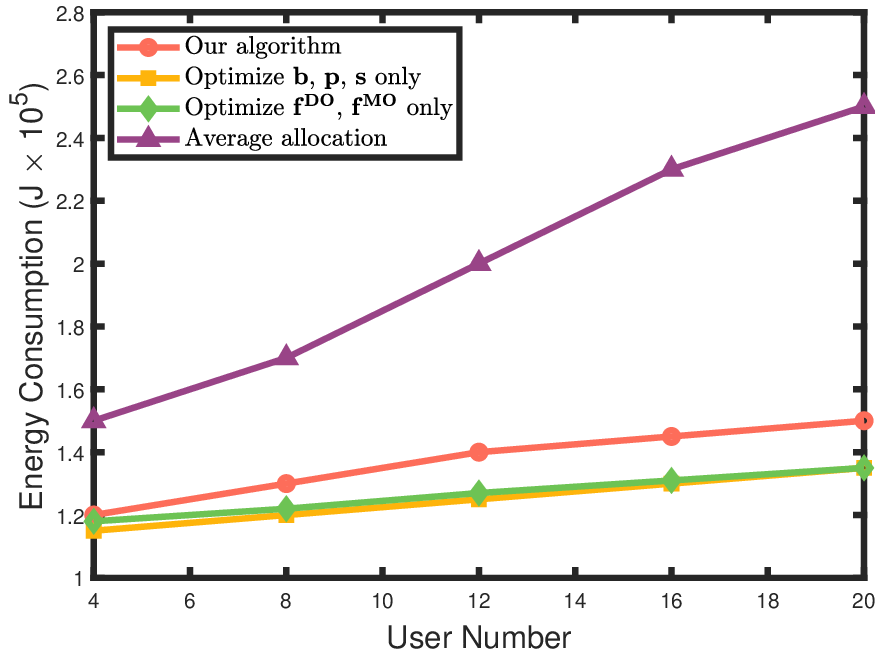}
        \caption{Energy consumption versus user number}
        \label{fig_user_ener}
     \end{subfigure}%
    \begin{subfigure}{0.33\textwidth}
        \includegraphics[width=0.9\linewidth]{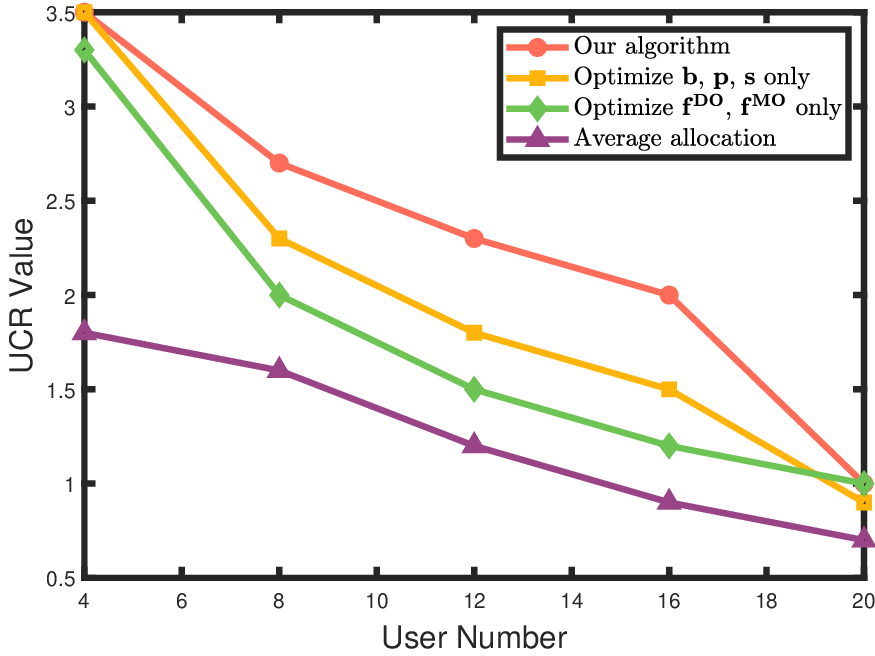}
        \caption{UCR versus user number}
        \label{fig_user_UCR}
    \end{subfigure}%
    \caption{Metrics concerning the number of users.}
    \label{fig_user}
\end{figure*}

\begin{figure*}[ht]
    \centering
    \begin{subfigure}{0.33\textwidth}
        \includegraphics[width=0.9\linewidth]{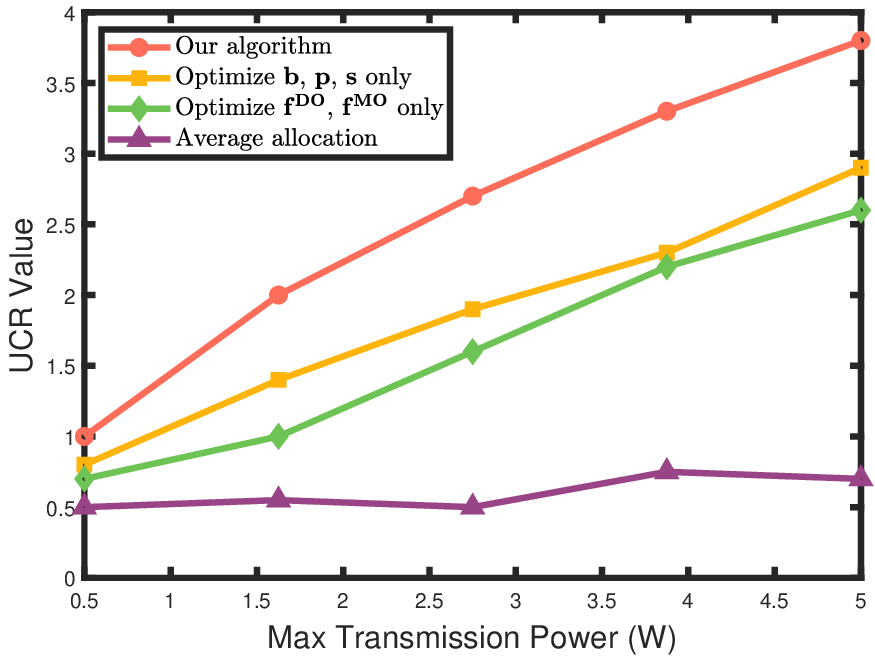}
        \caption{UCR versus max transmission power}
        \label{fig_ener_UCR}
    \end{subfigure}%
    \begin{subfigure}{0.33\textwidth}
        \includegraphics[width=0.9\linewidth]{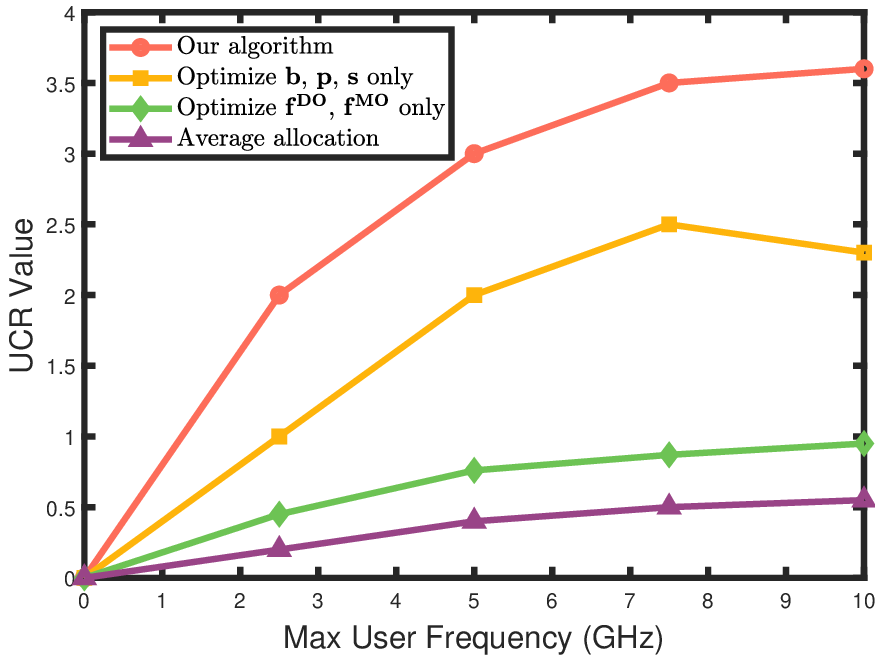}
        \caption{UCR versus max user frequency}
        \label{fig_userfreq_UCR}
     \end{subfigure}%
    \begin{subfigure}{0.33\textwidth}
        \includegraphics[width=0.9\linewidth]{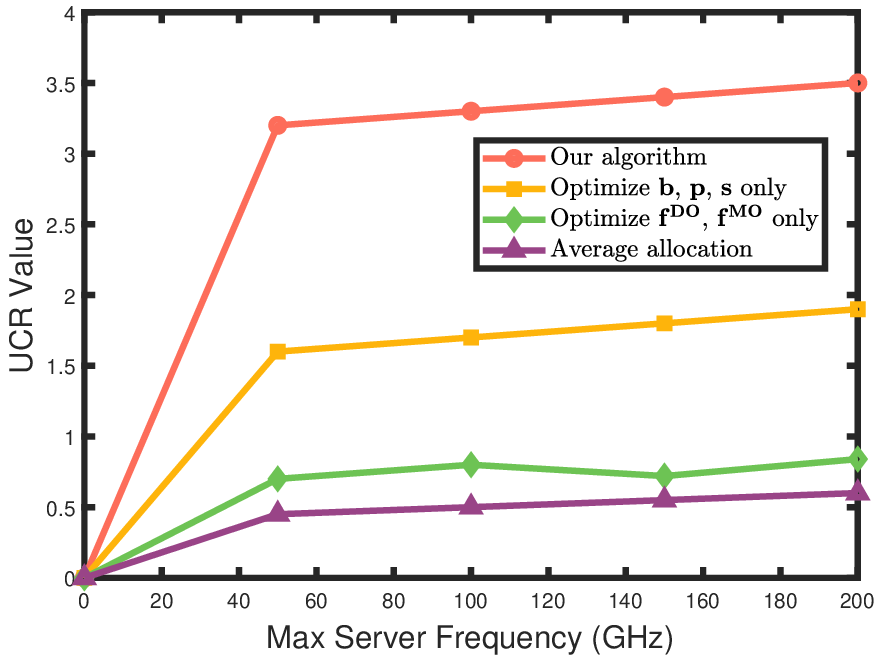}
        \caption{UCR versus max server frequency}
        \label{fig_serfreq_UCR}
    \end{subfigure}%
    \caption{The system utility-cost ratio (UCR) versus various parameters.}
    \label{fig_other_para}
\end{figure*}

\begin{figure}
\centering
\begin{subfigure}{.24\textwidth}
  \centering
  \includegraphics[width=0.9\linewidth]{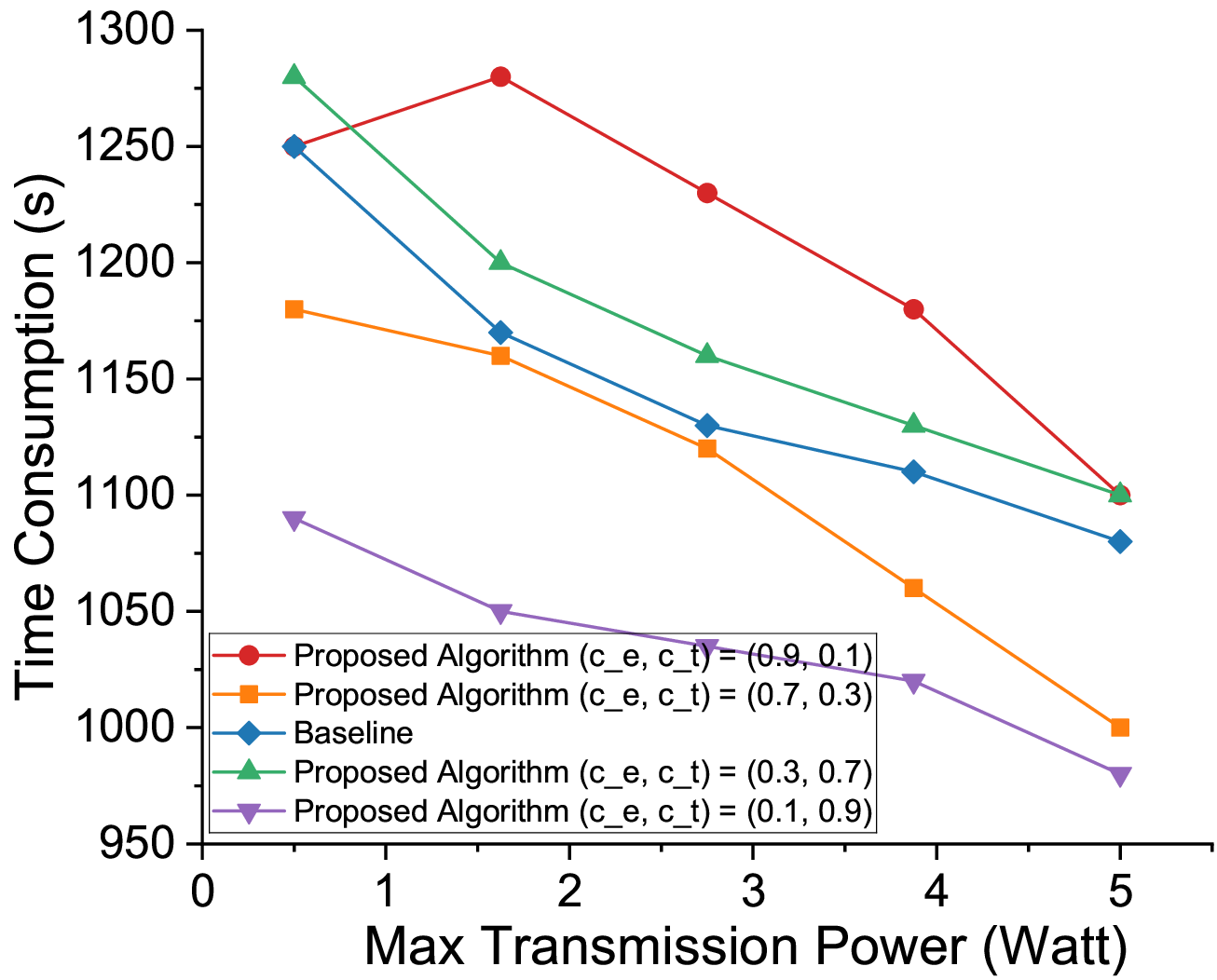}
    \caption{Total time consumption}
    \label{fig_cect_time}
\end{subfigure} \hspace{-15pt}
\begin{subfigure}{.24\textwidth}
  \centering
  \includegraphics[width=0.9\linewidth]{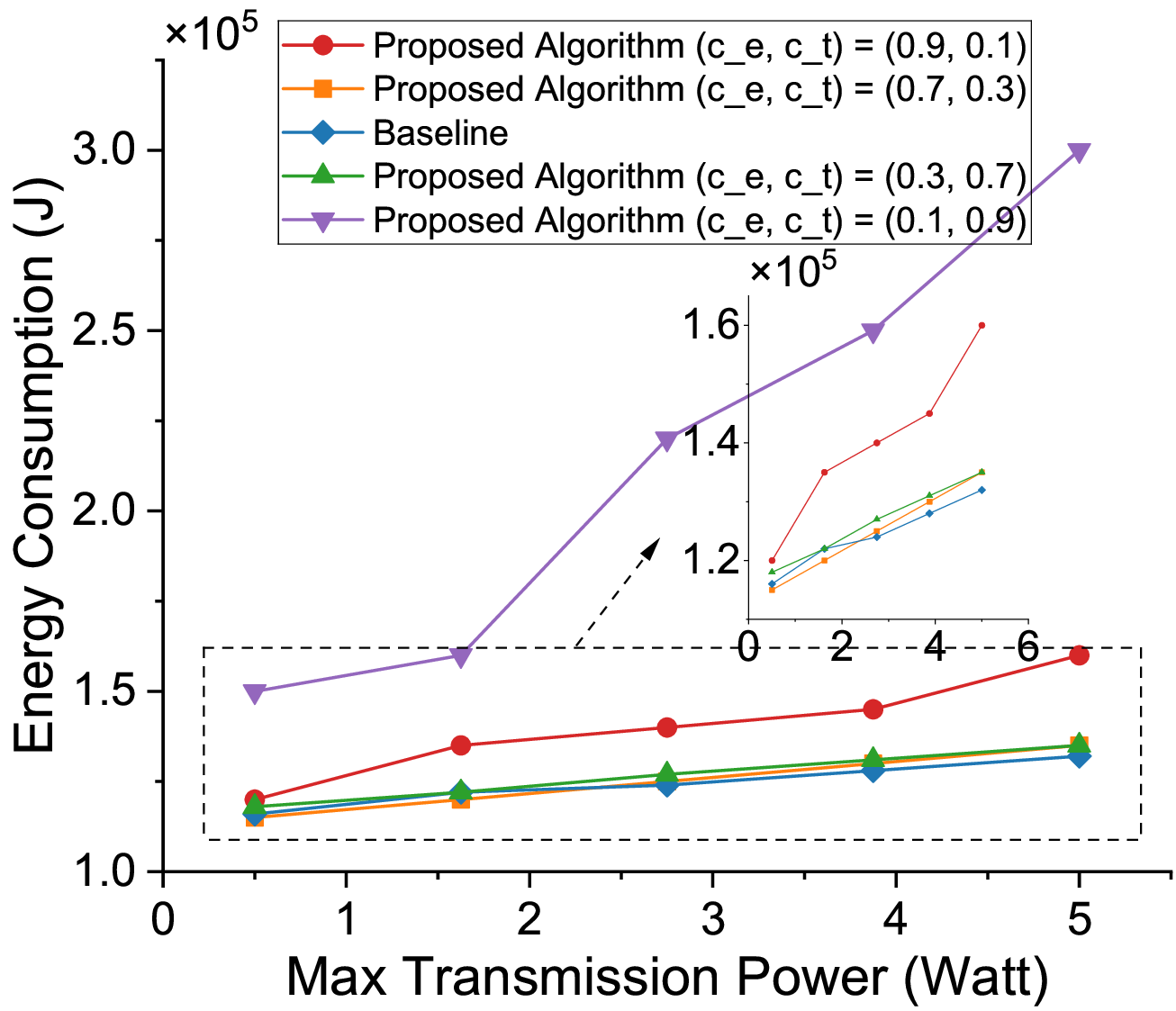}
    \caption{Total energy consumption}
    \label{fig_cect_ener}
\end{subfigure}\vspace{-3pt}
\caption{Experiments with different weight parameters.\vspace{-10pt}}
\label{fig:test}
\end{figure}

\textbf{Impact of user number on UCR.} In Fig.~\ref{fig_user}, we observe total energy consumption increase as the number of users increases. The difference in energy consumption between the optimal (our algorithm) and the least effective (average allocation) performance is approximately 44\%, which demonstrates the advantage of our algorithm compared to the average allocation method. However, allocating fewer resources to each user might reduce the efficiency of the fine-tuning process, reflecting a constraint in resources as the user base expands. As for time consumption, the time efficiency can also be impacted as the number of users increases. Our result shows that the difference in time consumption between our algorithm and the average allocation method is approximately 38.18\%. This implies that our algorithm is more time-efficient than the average allocation method, although an increase in users might lead to longer times the server requires to fine-tune each user's model. 

Furthermore, a decreasing trend can be shown in the average UCR value as the number of users increases, further corroborating the negative impact of resource dilution on user utility. With more users sharing the server's computational resources, each user experiences a reduction in the effectiveness of the offsite-tuning process. The difference in UCR values between our algorithm and the average allocation method is about -42.86\%, indicating that our algorithm is more effective in maintaining higher UCR values, although the specific implications of the UCR values and their impact on the overall system performance must be considered.

The results indicate that our algorithm outperforms the average allocation method regarding energy consumption, UCR value, and time efficiency. As the number of users increases, the impact of resource allocation efficiency on each user becomes more pronounced, potentially leading to a decrease in the efficiency of the fine-tuning process. Our algorithm effectively mitigates these challenges, demonstrating higher energy and time efficiency and an advantage in maintaining UCR values.

\textbf{UCR versus transmission power.} In Fig.~\ref{fig_ener_UCR}, the UCR increases as the transmission power grows for all algorithms since a higher transmission power expands the search space for optimization, allowing for more efficient use of the available power. At lower transmission powers (e.g., 0.5W and 1W), our algorithm has a slightly higher UCR than the others, indicating its effectiveness even in low-power scenarios. However, as the transmission power increases, the performance of our algorithm significantly surpasses the others. This is evident in the UCR values where our algorithm reaches 3.8 at 5W, while others are below this mark, and other algorithms always hold more significant fluctuations and generally lower UCR values compared to ours. For instance, average allocation shows the least improvement and remains below 1 UCR even at higher power levels.

\textbf{UCR versus computation resource.} Fig.~\ref{fig_userfreq_UCR} and Fig.~\ref{fig_serfreq_UCR} shows a clear trend where the UCR increases as the GPU frequency grows. This suggests that higher GPU frequencies provide a wider search space for optimization algorithms, leading to better performance in terms of UCR.
In both two figures, our algorithm consistently outperforms other algorithms. For example, at 8GHz user GPU frequency and 160 GHz server GPU frequency, the difference in UCR compared to the average allocation is 600\% and 700\%, respectively.\vspace{-10pt}

\subsection{Performance When Adapting Weight Parameters}
In our model, $c_e$ and $c_t$, are employed to modulate the optimization's focus. A higher value of $c_e$ compared to $c_t$ signifies an emphasis on minimizing energy consumption within the optimization framework (resp, indicates a prioritization of time efficiency over energy efficiency in the optimization process). To elucidate the impact of these weight parameters, we perform a series of experiments with varied $c_e$ and $c_t$ pairings, assessing our approach against a standard baseline: uniform allocation. This baseline approach allocates resources uniformly among users, setting the weight parameters at an equal balance of $c_e = 0.5, c_t= 0.5$. 

Figures~\ref{fig_cect_time} and~\ref{fig_cect_ener} present the outcomes of total energy and time consumption, respectively, evaluated across five distinct $(c_e, c_t)$ pairings under varying maximum transmission power thresholds. The data indicates as $c_e$ increases and $c_t$ diminishes, there is a noticeable reduction in total energy usage while total time expenditure escalates. This pattern emerges because an augmented $c_e$ (resp, $c_t$) shifts the focus of our optimization approach towards minimizing energy expenses (resp, correspondingly, time consumption). 

\section{Conclusion} \label{sec_con}
In conclusion, our work presents an impactful approach to the offsite fine-tuning of large models in the mobile edge computing environment under the physical layer security and addresses both the privacy concerns and resource allocation problems inherent in the conventional fine-tuning process. We formulated the problem of maximizing the utility-consumption ratio, which balances between maximizing user utility and minimizing system costs. Our proposed optimization algorithm blends the Dinkelbach algorithm, Successive Convex Approximation, fractional programming and alternating optimization techniques, and the algorithm converges effectively. Additionally, the simulated evaluations of our algorithm underscore its superiority over existing methods.

\appendix


\section{Proof of Assumption~\ref{assum_utility_def}} \label{appen_assum_utility_def}
In Assumption~\ref{assum_utility_def}, the concavity of $U_n(x)$, represented by a non-positive second derivative ($U^{''}_n(x) \leq 0 $ {for} $ x > 0$, is indicative of diminishing marginal returns. A concave utility function ensures that any local maximum is also a global maximum, simplifying the search for optimal solutions. Also, for $x>0$, both $U^{''}_n(x)$ and $U^{'}_n(x)$ are well-defined and continuous, which means $U_n(x)$ being twice differentiable. For $U_n(x)$, we do not define a value at $x=0$, if $U_n(x)$ approaches a finite limit as $x \rightarrow 0^{+}$, this limit can be used to define the function at $x=0$. If no such limit exists, $U_n(x)$ remains undefined at $x=0$. The non-decreasing aspect of $U_n(x)$ ensures that the utility increases, or at the very least remains constant, as $x$ (the secrecy rate) increases.

\section{Proof of Theorem~\ref{theo_sub_1_solu}} \label{appen_sub_1_solu}
After applying the KKT conditions to Problem $\mathbb{P}_3$, we have
\begin{align}
   &\frac{\partial \mathcal{L}_1(\boldsymbol{\phi}, \boldsymbol{\eta})}{\partial \phi_n} =  -\frac{1}{\phi_n} + yc_e\big(k_nC_1C_2(f^{\text{DO}}_n)^2\phi_n^{(C_2-1)} + \frac{p_n\mathcal{L}w}{r_{s,n}} \label{lag_1_partial} \\
   &- k_mC_3C_4(f^{\text{MO}}_n)^2\phi_n^{(-C_4-1)}\big) \notag\\
    & + \eta_n\big( \frac{C_1C_2\phi_n^{(C_2-1)}}{f^{\text{DO}}_n} + \frac{\mathcal{L} w}{r_{s,n}} - \frac{C_3C_4\phi_n^{(-C_4-1)}}{f^{\text{MO}}_n} -T\big) = 0.  \notag
\end{align}
And
\begin{equation}   
\eta_n \cdot  (\text{max} \{\frac{C_1 \phi_n ^{C_2}}{f^{\text{DO}}_n} + \frac{\phi_n \mathcal{L} w}{r_{s,n}} + \frac{C_3 \phi_n ^{-C_4}}{f^{\text{MO}}_n} \} - T) = 0 
\end{equation}



From (\ref{lag_1_partial}), we can denote the solution by $\widetilde{\phi}_n(\eta_n | \ y, \boldsymbol{b},  \boldsymbol{p}, \boldsymbol{f^{\text{DO}}}, \boldsymbol{f^{\text{MO}}}, T)$. Note that the right-hand side of~(\ref{lag_1_partial}) is monotonically increasing to $\phi_n$, the optimal solution $\widetilde{\phi}_n(\eta_n \ | \ y, \boldsymbol{b},  \boldsymbol{p}, \boldsymbol{f^{\text{DO}}}, \boldsymbol{f^{\text{MO}}}, T)$ can be obtained using bisection method. Considering~(\ref{cons_1_5_sele}), the Lagrange multiplier should meet the KKT conditions.


\begin{thebibliography}{32}


\ifx \showCODEN    \undefined \def \showCODEN     #1{\unskip}     \fi
\ifx \showDOI      \undefined \def \showDOI       #1{#1}\fi
\ifx \showISBNx    \undefined \def \showISBNx     #1{\unskip}     \fi
\ifx \showISBNxiii \undefined \def \showISBNxiii  #1{\unskip}     \fi
\ifx \showISSN     \undefined \def \showISSN      #1{\unskip}     \fi
\ifx \showLCCN     \undefined \def \showLCCN      #1{\unskip}     \fi
\ifx \shownote     \undefined \def \shownote      #1{#1}          \fi
\ifx \showarticletitle \undefined \def \showarticletitle #1{#1}   \fi
\ifx \showURL      \undefined \def \showURL       {\relax}        \fi
\providecommand\bibfield[2]{#2}
\providecommand\bibinfo[2]{#2}
\providecommand\natexlab[1]{#1}
\providecommand\showeprint[2][]{arXiv:#2}

\bibitem[Achiam et~al\mbox{.}(2023)]%
        {achiam2023gpt}
\bibfield{author}{\bibinfo{person}{Josh Achiam}, \bibinfo{person}{Steven Adler}, \bibinfo{person}{Sandhini Agarwal}, \bibinfo{person}{Lama Ahmad}, \bibinfo{person}{Ilge Akkaya}, \bibinfo{person}{Florencia~Leoni Aleman}, \bibinfo{person}{Diogo Almeida}, \bibinfo{person}{Janko Altenschmidt}, \bibinfo{person}{Sam Altman}, \bibinfo{person}{Shyamal Anadkat}, {et~al\mbox{.}}} \bibinfo{year}{2023}\natexlab{}.
\newblock \showarticletitle{Gpt-4 technical report}.
\newblock \bibinfo{journal}{\emph{arXiv preprint arXiv:2303.08774}} (\bibinfo{year}{2023}).
\newblock


\bibitem[Behnia et~al\mbox{.}(2022)]%
        {behnia2022ew}
\bibfield{author}{\bibinfo{person}{Rouzbeh Behnia}, \bibinfo{person}{Mohammadreza~Reza Ebrahimi}, \bibinfo{person}{Jason Pacheco}, {and} \bibinfo{person}{Balaji Padmanabhan}.} \bibinfo{year}{2022}\natexlab{}.
\newblock \showarticletitle{{EW-Tune}: A Framework for Privately Fine-Tuning Large Language Models with Differential Privacy}. In \bibinfo{booktitle}{\emph{2022 IEEE International Conference on Data Mining Workshops (ICDMW)}}. IEEE, \bibinfo{pages}{560--566}.
\newblock


\bibitem[Bommasani et~al\mbox{.}(2021)]%
        {bommasani2021opportunities}
\bibfield{author}{\bibinfo{person}{Rishi Bommasani}, \bibinfo{person}{Drew~A Hudson}, \bibinfo{person}{Ehsan Adeli}, \bibinfo{person}{Russ Altman}, \bibinfo{person}{Simran Arora}, \bibinfo{person}{Sydney von Arx}, \bibinfo{person}{Michael~S Bernstein}, \bibinfo{person}{Jeannette Bohg}, \bibinfo{person}{Antoine Bosselut}, \bibinfo{person}{Emma Brunskill}, {et~al\mbox{.}}} \bibinfo{year}{2021}\natexlab{}.
\newblock \showarticletitle{On the opportunities and risks of foundation models}.
\newblock \bibinfo{journal}{\emph{arXiv preprint arXiv:2108.07258}} (\bibinfo{year}{2021}).
\newblock


\bibitem[Chen et~al\mbox{.}(2019)]%
        {chen2019energy}
\bibfield{author}{\bibinfo{person}{Ying Chen}, \bibinfo{person}{Ning Zhang}, \bibinfo{person}{Yongchao Zhang}, \bibinfo{person}{Xin Chen}, \bibinfo{person}{Wen Wu}, {and} \bibinfo{person}{Xuemin Shen}.} \bibinfo{year}{2019}\natexlab{}.
\newblock \showarticletitle{Energy efficient dynamic offloading in mobile edge computing for {Internet} of {Things}}.
\newblock \bibinfo{journal}{\emph{IEEE Transactions on Cloud Computing}} \bibinfo{volume}{9}, \bibinfo{number}{3} (\bibinfo{year}{2019}), \bibinfo{pages}{1050--1060}.
\newblock


\bibitem[Cui et~al\mbox{.}(2019)]%
        {cui2019secure}
\bibfield{author}{\bibinfo{person}{Miao Cui}, \bibinfo{person}{Guangchi Zhang}, {and} \bibinfo{person}{Rui Zhang}.} \bibinfo{year}{2019}\natexlab{}.
\newblock \showarticletitle{Secure wireless communication via intelligent reflecting surface}.
\newblock \bibinfo{journal}{\emph{IEEE Wireless Communications Letters}} \bibinfo{volume}{8}, \bibinfo{number}{5} (\bibinfo{year}{2019}), \bibinfo{pages}{1410--1414}.
\newblock


\bibitem[Devlin et~al\mbox{.}(2018)]%
        {devlin2018bert}
\bibfield{author}{\bibinfo{person}{Jacob Devlin}, \bibinfo{person}{Ming-Wei Chang}, \bibinfo{person}{Kenton Lee}, {and} \bibinfo{person}{Kristina Toutanova}.} \bibinfo{year}{2018}\natexlab{}.
\newblock \showarticletitle{Bert: Pre-training of deep bidirectional transformers for language understanding}.
\newblock \bibinfo{journal}{\emph{arXiv preprint arXiv:1810.04805}} (\bibinfo{year}{2018}).
\newblock


\bibitem[Ding et~al\mbox{.}(2023)]%
        {ding2023parameter}
\bibfield{author}{\bibinfo{person}{Ning Ding}, \bibinfo{person}{Yujia Qin}, \bibinfo{person}{Guang Yang}, \bibinfo{person}{Fuchao Wei}, \bibinfo{person}{Zonghan Yang}, \bibinfo{person}{Yusheng Su}, \bibinfo{person}{Shengding Hu}, \bibinfo{person}{Yulin Chen}, \bibinfo{person}{Chi-Min Chan}, \bibinfo{person}{Weize Chen}, {et~al\mbox{.}}} \bibinfo{year}{2023}\natexlab{}.
\newblock \showarticletitle{Parameter-efficient fine-tuning of large-scale pre-trained language models}.
\newblock \bibinfo{journal}{\emph{Nature Machine Intelligence}} \bibinfo{volume}{5}, \bibinfo{number}{3} (\bibinfo{year}{2023}), \bibinfo{pages}{220--235}.
\newblock


\bibitem[Dong et~al\mbox{.}(2023)]%
        {dong2023lambo}
\bibfield{author}{\bibinfo{person}{Li Dong}, \bibinfo{person}{Feibo Jiang}, \bibinfo{person}{Yubo Peng}, \bibinfo{person}{Kezhi Wang}, \bibinfo{person}{Kun Yang}, \bibinfo{person}{Cunhua Pan}, {and} \bibinfo{person}{Robert Schober}.} \bibinfo{year}{2023}\natexlab{}.
\newblock \showarticletitle{Lambo: Large language model empowered edge intelligence}.
\newblock \bibinfo{journal}{\emph{arXiv preprint arXiv:2308.15078}} (\bibinfo{year}{2023}).
\newblock


\bibitem[Freund and Jarre(2001)]%
        {freund2001solving}
\bibfield{author}{\bibinfo{person}{Roland~W Freund} {and} \bibinfo{person}{Florian Jarre}.} \bibinfo{year}{2001}\natexlab{}.
\newblock \showarticletitle{Solving the sum-of-ratios problem by an interior-point method}.
\newblock \bibinfo{journal}{\emph{Journal of Global Optimization}} \bibinfo{volume}{19}, \bibinfo{number}{1} (\bibinfo{year}{2001}), \bibinfo{pages}{83--102}.
\newblock


\bibitem[Galv{\'a}n(2017)]%
        {galvan2017multivariate}
\bibfield{author}{\bibinfo{person}{Manuel~L{\'o}pez Galv{\'a}n}.} \bibinfo{year}{2017}\natexlab{}.
\newblock \showarticletitle{The multivariate bisection algorithm}.
\newblock \bibinfo{journal}{\emph{arXiv preprint arXiv:1702.05542}} (\bibinfo{year}{2017}).
\newblock


\bibitem[Guo et~al\mbox{.}(2020)]%
        {guo2020joint}
\bibfield{author}{\bibinfo{person}{Delin Guo}, \bibinfo{person}{Lan Tang}, \bibinfo{person}{Xinggan Zhang}, {and} \bibinfo{person}{Ying-Chang Liang}.} \bibinfo{year}{2020}\natexlab{}.
\newblock \showarticletitle{Joint optimization of handover control and power allocation based on multi-agent deep reinforcement learning}.
\newblock \bibinfo{journal}{\emph{IEEE Transactions on Vehicular Technology}} \bibinfo{volume}{69}, \bibinfo{number}{11} (\bibinfo{year}{2020}), \bibinfo{pages}{13124--13138}.
\newblock


\bibitem[He et~al\mbox{.}(2019)]%
        {he2019joint}
\bibfield{author}{\bibinfo{person}{Chaofan He}, \bibinfo{person}{Yang Hu}, \bibinfo{person}{Yan Chen}, {and} \bibinfo{person}{Bing Zeng}.} \bibinfo{year}{2019}\natexlab{}.
\newblock \showarticletitle{Joint power allocation and channel assignment for NOMA with deep reinforcement learning}.
\newblock \bibinfo{journal}{\emph{IEEE Journal on Selected Areas in Communications}} \bibinfo{volume}{37}, \bibinfo{number}{10} (\bibinfo{year}{2019}), \bibinfo{pages}{2200--2210}.
\newblock


\bibitem[Hu et~al\mbox{.}(2021)]%
        {hu2021lora}
\bibfield{author}{\bibinfo{person}{Edward~J Hu}, \bibinfo{person}{Yelong Shen}, \bibinfo{person}{Phillip Wallis}, \bibinfo{person}{Zeyuan Allen-Zhu}, \bibinfo{person}{Yuanzhi Li}, \bibinfo{person}{Shean Wang}, \bibinfo{person}{Lu Wang}, {and} \bibinfo{person}{Weizhu Chen}.} \bibinfo{year}{2021}\natexlab{}.
\newblock \showarticletitle{{LoRA}: Low-rank adaptation of large language models}.
\newblock \bibinfo{journal}{\emph{arXiv preprint arXiv:2106.09685}} (\bibinfo{year}{2021}).
\newblock


\bibitem[Jiang et~al\mbox{.}(2023)]%
        {jiang2023high}
\bibfield{author}{\bibinfo{person}{Penghao Jiang}, \bibinfo{person}{Ke Xin}, \bibinfo{person}{Chunxi Li}, {and} \bibinfo{person}{Yinsi Zhou}.} \bibinfo{year}{2023}\natexlab{}.
\newblock \showarticletitle{High-Efficiency Device-Cloud Collaborative Transformer Model}. In \bibinfo{booktitle}{\emph{Proceedings of the IEEE/CVF Conference on Computer Vision and Pattern Recognition}}. \bibinfo{pages}{2203--2209}.
\newblock


\bibitem[Kapetanovic et~al\mbox{.}(2015)]%
        {kapetanovic2015physical}
\bibfield{author}{\bibinfo{person}{Dzevdan Kapetanovic}, \bibinfo{person}{Gan Zheng}, {and} \bibinfo{person}{Fredrik Rusek}.} \bibinfo{year}{2015}\natexlab{}.
\newblock \showarticletitle{Physical layer security for massive {MIMO}: An overview on passive eavesdropping and active attacks}.
\newblock \bibinfo{journal}{\emph{IEEE Communications Magazine}} (\bibinfo{year}{2015}).
\newblock


\bibitem[Luo et~al\mbox{.}(2020)]%
        {luo2020hfel}
\bibfield{author}{\bibinfo{person}{Siqi Luo}, \bibinfo{person}{Xu Chen}, \bibinfo{person}{Qiong Wu}, \bibinfo{person}{Zhi Zhou}, {and} \bibinfo{person}{Shuai Yu}.} \bibinfo{year}{2020}\natexlab{}.
\newblock \showarticletitle{HFEL: Joint edge association and resource allocation for cost-efficient hierarchical federated edge learning}.
\newblock \bibinfo{journal}{\emph{IEEE Transactions on Wireless Communications}} \bibinfo{volume}{19}, \bibinfo{number}{10} (\bibinfo{year}{2020}), \bibinfo{pages}{6535--6548}.
\newblock


\bibitem[Ma et~al\mbox{.}(2018)]%
        {ma2018security}
\bibfield{author}{\bibinfo{person}{Jianjun Ma}, \bibinfo{person}{Rabi Shrestha}, \bibinfo{person}{Jacob Adelberg}, \bibinfo{person}{Chia-Yi Yeh}, \bibinfo{person}{Zahed Hossain}, \bibinfo{person}{Edward Knightly}, \bibinfo{person}{Josep~Miquel Jornet}, {and} \bibinfo{person}{Daniel~M Mittleman}.} \bibinfo{year}{2018}\natexlab{}.
\newblock \showarticletitle{Security and eavesdropping in {Terahertz} wireless links}.
\newblock \bibinfo{journal}{\emph{Nature}} \bibinfo{volume}{563}, \bibinfo{number}{7729} (\bibinfo{year}{2018}), \bibinfo{pages}{89--93}.
\newblock


\bibitem[Radford et~al\mbox{.}(2021)]%
        {radford2021learning}
\bibfield{author}{\bibinfo{person}{Alec Radford}, \bibinfo{person}{Jong~Wook Kim}, \bibinfo{person}{Chris Hallacy}, \bibinfo{person}{Aditya Ramesh}, \bibinfo{person}{Gabriel Goh}, \bibinfo{person}{Sandhini Agarwal}, \bibinfo{person}{Girish Sastry}, \bibinfo{person}{Amanda Askell}, \bibinfo{person}{Pamela Mishkin}, \bibinfo{person}{Jack Clark}, {et~al\mbox{.}}} \bibinfo{year}{2021}\natexlab{}.
\newblock \showarticletitle{Learning transferable visual models from natural language supervision}. In \bibinfo{booktitle}{\emph{International Conference on Machine Learning}}. PMLR, \bibinfo{pages}{8748--8763}.
\newblock


\bibitem[Rawte et~al\mbox{.}(2023)]%
        {rawte2023survey}
\bibfield{author}{\bibinfo{person}{Vipula Rawte}, \bibinfo{person}{Amit Sheth}, {and} \bibinfo{person}{Amitava Das}.} \bibinfo{year}{2023}\natexlab{}.
\newblock \showarticletitle{A survey of hallucination in large foundation models}.
\newblock \bibinfo{journal}{\emph{arXiv preprint arXiv:2309.05922}} (\bibinfo{year}{2023}).
\newblock


\bibitem[Sajjad et~al\mbox{.}(2023)]%
        {sajjad2023effect}
\bibfield{author}{\bibinfo{person}{Hassan Sajjad}, \bibinfo{person}{Fahim Dalvi}, \bibinfo{person}{Nadir Durrani}, {and} \bibinfo{person}{Preslav Nakov}.} \bibinfo{year}{2023}\natexlab{}.
\newblock \showarticletitle{On the effect of dropping layers of pre-trained transformer models}.
\newblock \bibinfo{journal}{\emph{Computer Speech \& Language}}  \bibinfo{volume}{77} (\bibinfo{year}{2023}), \bibinfo{pages}{101429}.
\newblock


\bibitem[Shen and Yu(2018)]%
        {shen2018fractional}
\bibfield{author}{\bibinfo{person}{Kaiming Shen} {and} \bibinfo{person}{Wei Yu}.} \bibinfo{year}{2018}\natexlab{}.
\newblock \showarticletitle{Fractional programming for communication systems—{Part I}: Power control and beamforming}.
\newblock \bibinfo{journal}{\emph{IEEE Transactions on Signal Processing}} \bibinfo{volume}{66}, \bibinfo{number}{10} (\bibinfo{year}{2018}), \bibinfo{pages}{2616--2630}.
\newblock


\bibitem[Shen et~al\mbox{.}(2019)]%
        {shen2019computation}
\bibfield{author}{\bibinfo{person}{Shihao Shen}, \bibinfo{person}{Yiwen Han}, \bibinfo{person}{Xiaofei Wang}, {and} \bibinfo{person}{Yan Wang}.} \bibinfo{year}{2019}\natexlab{}.
\newblock \showarticletitle{Computation offloading with multiple agents in edge-computing--supported {IoT}}.
\newblock \bibinfo{journal}{\emph{ACM Transactions on Sensor Networks (TOSN)}} \bibinfo{volume}{16}, \bibinfo{number}{1} (\bibinfo{year}{2019}), \bibinfo{pages}{1--27}.
\newblock


\bibitem[Shen et~al\mbox{.}(2023)]%
        {shen2023large}
\bibfield{author}{\bibinfo{person}{Yifei Shen}, \bibinfo{person}{Jiawei Shao}, \bibinfo{person}{Xinjie Zhang}, \bibinfo{person}{Zehong Lin}, \bibinfo{person}{Hao Pan}, \bibinfo{person}{Dongsheng Li}, \bibinfo{person}{Jun Zhang}, {and} \bibinfo{person}{Khaled~B Letaief}.} \bibinfo{year}{2023}\natexlab{}.
\newblock \showarticletitle{Large language models empowered autonomous edge {AI} for connected intelligence}.
\newblock \bibinfo{journal}{\emph{arXiv preprint arXiv:2307.02779}} (\bibinfo{year}{2023}).
\newblock


\bibitem[Tian et~al\mbox{.}(2022)]%
        {tian2022fedbert}
\bibfield{author}{\bibinfo{person}{Yuanyishu Tian}, \bibinfo{person}{Yao Wan}, \bibinfo{person}{Lingjuan Lyu}, \bibinfo{person}{Dezhong Yao}, \bibinfo{person}{Hai Jin}, {and} \bibinfo{person}{Lichao Sun}.} \bibinfo{year}{2022}\natexlab{}.
\newblock \showarticletitle{FedBERT: When federated learning meets pre-training}.
\newblock \bibinfo{journal}{\emph{ACM Transactions on Intelligent Systems and Technology (TIST)}} \bibinfo{volume}{13}, \bibinfo{number}{4} (\bibinfo{year}{2022}), \bibinfo{pages}{1--26}.
\newblock


\bibitem[Vaswani et~al\mbox{.}(2017)]%
        {vaswani2017attention}
\bibfield{author}{\bibinfo{person}{Ashish Vaswani}, \bibinfo{person}{Noam Shazeer}, \bibinfo{person}{Niki Parmar}, \bibinfo{person}{Jakob Uszkoreit}, \bibinfo{person}{Llion Jones}, \bibinfo{person}{Aidan~N Gomez}, \bibinfo{person}{{\L}ukasz Kaiser}, {and} \bibinfo{person}{Illia Polosukhin}.} \bibinfo{year}{2017}\natexlab{}.
\newblock \showarticletitle{Attention is all you need}.
\newblock \bibinfo{journal}{\emph{Advances in neural information processing systems}}  \bibinfo{volume}{30} (\bibinfo{year}{2017}).
\newblock


\bibitem[Wang et~al\mbox{.}(2016)]%
        {wang2016resource}
\bibfield{author}{\bibinfo{person}{Li Wang}, \bibinfo{person}{Huan Tang}, \bibinfo{person}{Huaqing Wu}, {and} \bibinfo{person}{Gordon~L St{\"u}ber}.} \bibinfo{year}{2016}\natexlab{}.
\newblock \showarticletitle{Resource allocation for {D2D} communications underlay in {Rayleigh} fading channels}.
\newblock \bibinfo{journal}{\emph{IEEE Transactions on Vehicular Technology}} (\bibinfo{year}{2016}).
\newblock


\bibitem[Xiao et~al\mbox{.}(2023)]%
        {xiao2023offsite}
\bibfield{author}{\bibinfo{person}{Guangxuan Xiao}, \bibinfo{person}{Ji Lin}, {and} \bibinfo{person}{Song Han}.} \bibinfo{year}{2023}\natexlab{}.
\newblock \showarticletitle{Offsite-tuning: Transfer learning without full model}.
\newblock \bibinfo{journal}{\emph{arXiv preprint arXiv:2302.04870}} (\bibinfo{year}{2023}).
\newblock


\bibitem[Xu et~al\mbox{.}(2023)]%
        {xu2023federated}
\bibfield{author}{\bibinfo{person}{Mengwei Xu}, \bibinfo{person}{Yaozong Wu}, \bibinfo{person}{Dongqi Cai}, \bibinfo{person}{Xiang Li}, {and} \bibinfo{person}{Shangguang Wang}.} \bibinfo{year}{2023}\natexlab{}.
\newblock \showarticletitle{Federated fine-tuning of billion-sized language models across mobile devices}.
\newblock \bibinfo{journal}{\emph{arXiv preprint arXiv:2308.13894}} (\bibinfo{year}{2023}).
\newblock


\bibitem[Yang et~al\mbox{.}(2020)]%
        {yang2020deep}
\bibfield{author}{\bibinfo{person}{Helin Yang}, \bibinfo{person}{Zehui Xiong}, \bibinfo{person}{Jun Zhao}, \bibinfo{person}{Dusit Niyato}, \bibinfo{person}{Liang Xiao}, {and} \bibinfo{person}{Qingqing Wu}.} \bibinfo{year}{2020}\natexlab{}.
\newblock \showarticletitle{Deep reinforcement learning-based intelligent reflecting surface for secure wireless communications}.
\newblock \bibinfo{journal}{\emph{IEEE Transactions on Wireless Communications}} \bibinfo{volume}{20}, \bibinfo{number}{1} (\bibinfo{year}{2020}), \bibinfo{pages}{375--388}.
\newblock


\bibitem[Yu et~al\mbox{.}(2020)]%
        {yu2020joint}
\bibfield{author}{\bibinfo{person}{Zhe Yu}, \bibinfo{person}{Yanmin Gong}, \bibinfo{person}{Shimin Gong}, {and} \bibinfo{person}{Yuanxiong Guo}.} \bibinfo{year}{2020}\natexlab{}.
\newblock \showarticletitle{Joint task offloading and resource allocation in UAV-enabled mobile edge computing}.
\newblock \bibinfo{journal}{\emph{IEEE Internet of Things Journal}} \bibinfo{volume}{7}, \bibinfo{number}{4} (\bibinfo{year}{2020}), \bibinfo{pages}{3147--3159}.
\newblock


\bibitem[Zhang et~al\mbox{.}(2023)]%
        {zhang2023google}
\bibfield{author}{\bibinfo{person}{Yu Zhang}, \bibinfo{person}{Wei Han}, \bibinfo{person}{James Qin}, \bibinfo{person}{Yongqiang Wang}, \bibinfo{person}{Ankur Bapna}, \bibinfo{person}{Zhehuai Chen}, \bibinfo{person}{Nanxin Chen}, \bibinfo{person}{Bo Li}, \bibinfo{person}{Vera Axelrod}, \bibinfo{person}{Gary Wang}, {et~al\mbox{.}}} \bibinfo{year}{2023}\natexlab{}.
\newblock \showarticletitle{Google usm: Scaling automatic speech recognition beyond 100 languages}.
\newblock \bibinfo{journal}{\emph{arXiv preprint arXiv:2303.01037}} (\bibinfo{year}{2023}).
\newblock


\bibitem[Zhao et~al\mbox{.}(2023)]%
        {zhao2023human}
\bibfield{author}{\bibinfo{person}{Jun Zhao}, \bibinfo{person}{Liangxin Qian}, {and} \bibinfo{person}{Wenhan Yu}.} \bibinfo{year}{2023}\natexlab{}.
\newblock \showarticletitle{Human-Centric Resource Allocation in the {M}etaverse over Wireless Communications}.
\newblock \bibinfo{journal}{\emph{IEEE Journal on Selected Areas in Communications}} (\bibinfo{year}{2023}).
\newblock


\end{thebibliography}
\end{document}